\newcommand{\gdist}{d} 
\newcommand{\dist}[2]{\gdist(#1,#2)}
\DeclareMathOperator{\aff}{aff} 
\newcommand{\del}{{\rm Del}} 
\newcommand{\vor}{{\rm Vor}}
\newcommand{\distance}[2]{\|#1-#2\|}
\newcommand{\R}{\mathbb{R}}
\newcommand{\Z}{\mathbb{Z}}
\newcommand{\dotp}[2]{\vec{#1}\cdot\vec{#2}}
\newcommand{\ignore}[1]{}
\newcommand{\Pro}{\mathcal{P}}
\newcommand{\str}{{\rm star}}
\newtheorem{lemma}{Lemma}
\newtheorem{defi}[lemma]{Definition}
\newtheorem{theorem}[lemma]{Theorem}
\newtheorem{corollary}[lemma]{Corollary}
\newtheorem{remark}[lemma]{Remark}
\begin{document}

\title{Delaunay simplices in diagonally distorted lattices}
\author{
Aruni Choudhary\footnote{Freie Universit\"at Berlin, Berlin, Germany}
\and 
Arijit Ghosh\footnote{The Institute of Mathematical Sciences, Chennai, India}
}
\maketitle

\begin{abstract}
\label{abstract}
\emph{Delaunay protection} is a measure of how far is a Delaunay triangulation 
from being degenerate. 
In this short paper we study the protection properties and other 
quality measures of the Delaunay triangulations of a family of lattices that 
is obtained by distorting the integer grid in $\R^d$.
We show that the quality measures of this family are maximized for a certain 
distortion parameter, and that for this parameter, the lattice is isometric 
to the \emph{permutahedral lattice}, which is a well-known object in 
discrete geometry.
\end{abstract}

\section{Introduction}
\label{section:intro}

Simplicial meshes are now standard methods to approximate geometric
objects. These meshes are used in algorithms for several tasks,
including numerically solving partial differential equations, 
finite element approximation of functions and computational dynamical systems. 
The quality of
approximation of these algorithms depends on the {\em goodness} of the mesh. 
The notion of goodness of a triangulation is defined using some geometric 
properties of the simplices involved.
We discuss three measures to capture goodness: the first is the \emph{thickness}
of a simplex, which is the ratio of the minimal height to the maximal
edge length of the simplex.
The thickness of the triangulation is then the smallest thickness of any of
its simplices.
The second measure is the \emph{aspect ratio} of a simplex, which is the ratio
of the minimal height to the diameter of its circumsphere.
Again, the aspect ratio of a triangulation is the smallest aspect ratio of
its simplices.
For an introduction on the connections between simplex quality in a
mesh and finite element methods, we refer the reader to the survey of
Shewchuk \cite{Shewchuk-techreport2002}.

Delaunay triangulations are one of the most popular simplicial meshes. 
Delaunay triangulations are unique for a given (non-degenerate) point set and 
they satisfy many useful structural properties. 
Over the last few
decades fast algorithms have been developed to construct Delaunay triangulations
and to update them under insertions and deletions to the point set. 
They have been generalized to a larger class of 
{\em weighted Delaunay triangulations} \cite{Aurenhammer87}. 
Weighted point sets and point set perturbation techniques have been used to
get good Delaunay meshes in $\R^{2}$ and $\R^{3}$ both in theory and practice. 
For an introduction on these topics, we refer the reader to the recent book 
on Delaunay mesh generation by Cheng, Dey, and Shewchuk \cite{ChengDSbook13}.

In the context of Delaunay triangulations, recently Boissonnat, Dyer
and Ghosh \cite{bdg-delaunay} have introduced a new quality measure called
\emph{protection}: intuitively, this measures how far is a Delaunay 
triangulation from being degenerate. 
More specifically, consider the circumball of a $d$-simplex in the
Delaunay triangulation of a point set in general position in $\R^{d}$.
Since the point set is in general position, there are precisely $(d+1)$ points
incident to this ball, and none in the interior.
Protection is then defined as the maximal amount by which each circumball of a
$d$-simplex in the Delaunay triangulation can be inflated,
so that it does not contain any other point of the lattice in its interior.
In actual terms, protection is not a new measure but just a parameterization 
of the general position condition for point sets in Euclidean space. 
When it is clear from the context, we refer to the protection of a point set
as the protection of its Delaunay triangulation.
For any degenerate lattice, like the regular grid in $\R^d$, 
the protection value is $0$.
On the other hand, the \emph{permutahedral} lattice, which is one of the very
few lattices in general position, is known to have a high
value of protection~\cite{ckw-coxeter}. 
Boissonnat et al. \cite{bdg-delaunay} showed that good
protection implies stability of Delaunay triangulations with respect to 
metric distortions and perturbations of the point set. 
They also proved that a good value of protection guarantees good-quality
simplices~\cite{bdg-delaunay} in the Delaunay triangulation. 
This measure has also successfully been used to study Delaunay triangulations 
on manifolds, discrete Riemannian Voronoi diagrams, manifold
reconstructions from point sample and anisotropic meshing 
\cite{BoissonnatDG17,BoissonnatDGO17,BoissonnatSTY15,BoissonnatRW17,BoissonnatWY15}.

Algorithmic techniques (like weighted point sets, perturbation of
point sets, and refinement method) for getting good Delaunay triangulations work
well both in theory and practice in $\R^{2}$ and $\R^{3}$ 
\cite{ChengDEFTjacm00,EdelsbrunnerLMSTTUWstoc00,Litcs03a}. 
But these techniques do not scale well to higher dimensions. 
There are perturbation algorithms (see \cite{BoissonnatDGijcga14}) for points in $\R^d$,
that give quality measures such as thickness and aspect ratio of the order
$2^{-\Omega(d^{3})}$, which is exponentially small in $d$.
The same thing is true for protection in $\R^{d}$, see
\cite{BoissonnatDGijcga14,BoissonnatDGesa15}. 
This leads one to search for more structured points set in Euclidean space
whose Delaunay triangulations would have better quality guarantees. 
A natural class of candidates are lattices in $\R^{d}$. 

In this paper we concern ourselves with a family of lattices, which we obtain
by a distortion of the integer grid in $\R^d$ along the principal diagonal
direction $(1,\ldots,1)$. 
We call this family of lattices as the {\em diagonally distorted
lattices}. 
Essentially, stretching or squeezing the grid linearly along this direction gives 
this family of lattices.
This family was first studied in~\cite{ek-dual} by Edelsbrunner and Kerber,
with an ulterior motive to do topological analysis of high-dimensional
image data.
Later, this lattice was used to study covering and packing problems of
Euclidean balls in different contexts~\cite{ei-fcc, ek-diag,
  iku-sphere}.

All simplices in the Delaunay triangulation of a diagonally distorted lattice
are congruent.
Naturally, the thickness, aspect ratio and protection of each simplex is the same 
and defines the parameters for the lattice.
Recently, the qualities of a class of triangulations known as the 
\emph{Coxeter} triangulations was studied in~\cite{ckw-coxeter}.
This class of triangulations includes the Delaunay triangulation of the 
permutahedral lattice.

\subsection*{Outline of the paper}

In Section~\ref{section:backgnd} we discuss the preliminaries, including
protection, the permutahedral lattice and the family of diagonally distorted lattices,
and we summarize the main results of this paper.
Section~\ref{section:diagprotection} expands on the details of our results,
where we study the protection and other quality measures of the 
diagonally distorted lattices.

\section{Background and Contributions}
\label{section:backgnd}

We briefly mention a few geometric concepts needed for our results.
The interested reader may refer to
\cite{stanford-tech,bdg-delaunay,dutchbook,ckw-coxeter,csb-book,ek-dual}
for more details.

\subsection{General notations}

In this paper we work with the standard $\ell_{2}$-norm in
$\R^{d}$, and the distance between any two points $p,q$ in $\R^{d}$ will be 
denoted by $\distance{p}{q}$. 
For any point $p \in \R^{d}$  and any set $X \subset \R^{d}$, we denote the distance
between $p$ and $X$ as $\dist{p}{X}:= \inf_{x \in X} \distance{x}{p}$. 
Given any point $c\in \R^d$ and a radius $r \geq 0$, 
the ball $B(c, r) = \{ x\in \R^d : \dist{x}{c} < r \}$ is open,
and the ball $\overline{B}(c, r) = \{ x \in \R^d: \dist{x}{c} \leq r \}$ 
is closed.

For $X \subseteq \R^{d}$, the {\em convex hull} and {\em affine hull} of
$X$ will be denoted by $\mathrm{conv}(X)$ and $\mathrm{aff}(X)$, respectively.
For a set $X \subseteq \R^{d}$, we denote by $\dim(X)$ the affine dimension
of $\aff(X)$.

A simplex $\sigma= (p_{0}, \, \dots, \, p_{j})$
denotes the set of points $\{ p_{0}, \, \dots, \, p_{j}\} \subset \R^{d}$. 
The {\em combinatorial dimension}
of $\sigma$ is $j$, and {\em geometric dimension} of $\sigma$ is $\dim(\sigma)$.

A simplex $\tau$ is called a {\em sub-simplex} or a {\em face} (and
{\em proper face}) of a simplex $\sigma$ if $\tau \subseteq \sigma$
(if $\tau \subsetneq \sigma$). 
For any vertex $p_{i}$ in $\sigma = (p_{0}, \, \dots, \, p_{j})$,
$\sigma_{p_{i}}$
denotes the sub-simplex with vertex set $\{p_{0}, \, \dots, \, p_{j} \}
\setminus \{p_{i}\}$, and $D_{p_{i}}(\sigma)$ denotes the distance
$d \left(p_{i}, \aff\left(\sigma_{p_{i}}\right)\right)$.

We denote the circumradius and longest edge length
of $\sigma$ by $ R(\sigma)$ and $\eta(\sigma)$, respectively.
The quality measure {\em thickness} $\Theta(\sigma)$ of a simplex
$\sigma$ with combinatorial dimension $j$ is defined as 
\begin{equation}
\Theta(\sigma) =
  \begin{cases}
    1       & \quad \text{$j = 0$}\\
    \min_{p \in \sigma} \frac{D_{p}(\sigma)}{\eta(\sigma)}  & \quad \text{otherwise}
  \end{cases},
\end{equation}
and the {\em aspect ratio} $\Gamma(\sigma)$ is defined as 
\begin{equation}
\Gamma(\sigma) =
  \begin{cases}
    1       & \quad \text{$j = 0$}\\
    \min_{p \in \sigma} \frac{D_{p}(\sigma)}{2R(\sigma)}  & \quad \text{otherwise}
  \end{cases}.
\end{equation}

A lattice $\Lambda$ is a countable subset of $\R^{d}$ of the form 
$\Lambda := \left\{ \sum_{i=1}^{d} z_{i} v_{i} \,\mid\, \forall \, i,\,
z_{i} \in \mathbb{Z} \right\}$, 
where $\{v_{1}, \, \dots, \, v_{d}\}$ are linearly independent vectors in
$\R^{d}$. The vectors $\{v_{1}, \, \dots, \, v_{d}\}$ are called {\em representative vectors}
of $\Lambda$.  
We will interchangeably call points in $\Lambda$ as vectors to simplify notation.
Determinant $\det (\Lambda)$ of $\Lambda$, by abuse of notation, is the
absolute value of the determinant of the matrix whose columns are the vectors
$\{v_{1}, \, \dots, \, v_{d}\}$. Also, $s_{i}(\Lambda)$ will
denote the $i$-th smallest singular value of the matrix with columns
$\{v_{1}, \, \dots, \, v_{d}\}$.  Observe that $\det (\Lambda) = \prod_{i=1}^{d}s_{i}(\Lambda)$.

For a given lattice $\Lambda$, let $\lambda_{1}(\Lambda)$ denote the length of
smallest vector in $\Lambda$. 
The following result is a direct application of Minkowski's theorem \cite{csb-book}.

\begin{theorem}
\label{theorem:lambda-lattice-bounds}
For any lattice $\Lambda\subset \R^{d}$, we have 
$\lambda_{1}\left( \Lambda \right) \leq
\sqrt{d} \det(\Lambda)^{1/d}$. 
\end{theorem}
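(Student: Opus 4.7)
The plan is to deduce this from the classical Minkowski convex body theorem, which I will assume as the cited ``direct application.'' Recall the statement: if $K\subset\R^{d}$ is a bounded, centrally symmetric, convex body with $\mathrm{vol}(K)>2^{d}\det(\Lambda)$, then $K$ contains a non-zero vector of $\Lambda$. I would first pick the convenient body to which Minkowski is easiest to apply, and then translate the conclusion from the $\ell_{\infty}$-norm to the $\ell_{2}$-norm used in the statement.

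First, for a parameter $a>0$ to be chosen, consider the axis-aligned cube
\[
K_{a} \;=\; \bigl\{\,x\in\R^{d}\;:\;\|x\|_{\infty}\leq a\,\bigr\},
\]
which is centrally symmetric and convex with $\mathrm{vol}(K_{a})=(2a)^{d}$. Set $a=\det(\Lambda)^{1/d}$; then $\mathrm{vol}(K_{a})=2^{d}\det(\Lambda)$, which is exactly the threshold in Minkowski's theorem but not strictly above it. To circumvent this, for each $\eps>0$ I would apply Minkowski to $K_{a+\eps}$, whose volume strictly exceeds $2^{d}\det(\Lambda)$, yielding a non-zero lattice vector $v_{\eps}\in\Lambda\cap K_{a+\eps}$.

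Next I would push $\eps\to 0$ using discreteness of $\Lambda$. Since all $v_{\eps}$ lie in the bounded set $K_{a+1}$ and $\Lambda$ is discrete, the set $\Lambda\cap K_{a+1}\setminus\{0\}$ is finite, so some non-zero $v\in\Lambda$ appears as $v_{\eps_{n}}$ for a sequence $\eps_{n}\to 0$. Passing to the limit gives $v\in K_{a}$, i.e.\ $\|v\|_{\infty}\leq\det(\Lambda)^{1/d}$. (Alternatively, one may quote a closed-body version of Minkowski's theorem that allows equality in the volume hypothesis, yielding the same vector in one step.)

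Finally, the $\ell_{\infty}$-to-$\ell_{2}$ comparison $\|x\|_{2}\leq\sqrt{d}\,\|x\|_{\infty}$ for $x\in\R^{d}$ converts the bound into
\[
\lambda_{1}(\Lambda)\;\leq\;\|v\|_{2}\;\leq\;\sqrt{d}\,\|v\|_{\infty}\;\leq\;\sqrt{d}\,\det(\Lambda)^{1/d},
\]
which is the desired inequality. There is no real obstacle here beyond correctly invoking Minkowski's theorem; the mild technicality is the handling of the boundary case $\mathrm{vol}(K_{a})=2^{d}\det(\Lambda)$, which I would address by the $\eps$-perturbation argument above.
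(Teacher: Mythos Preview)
Your proof is correct and matches the paper's indicated approach: the paper does not spell out any details beyond stating that the bound is ``a direct application of Minkowski's theorem,'' and your argument via the cube $K_{a}$ together with the norm comparison $\|x\|_{2}\leq\sqrt{d}\,\|x\|_{\infty}$ is exactly the standard way to obtain the constant $\sqrt{d}$. The $\eps$-perturbation to handle the boundary case is a clean and valid way to fill in the one technicality.
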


\subsection{Voronoi diagram, Delaunay complexes and protection}

Let $P$ be a subset of $\R^{d}$. 
For any point $p \in P$, {\em Voronoi cell} of $p$ is defined as the region
\[
\vor(p) : = \left\{ x \in \R^{d} \, \mid \, \forall q \in P, \, 
\distance{x}{p} \le \distance{x}{q} \right\},
\]
 and for a simplex $\sigma = (p_{0}, \, \dots, \, p_{k}) \subseteq P$, 
the {\em Voronoi cell}  of $\sigma$ is defined as $\vor(\sigma): =
\cap_{i=0}^{k} \vor(p_{i})$. The {\em Voronoi diagram} of $P$, denoted by
$\vor(P)$, is the decomposition of $\R^{d}$ into Voronoi cells of 
simplices with vertices from $P$. 
The {\em Delaunay complex} of $P$, $\del(P)$, is the {\em nerve} of
$\vor(P)$, that is, $\sigma \in \del(P)$ iff $\vor(\sigma) \neq
\emptyset$. 
For a point $p \in P$, the {\em star} of $P$, denoted by $\str(p, P)$, is the 
set of simplices $\sigma$ in $\del(P)$ such that $p \in \sigma$.  

Observe that for a lattice $\Lambda$, $\vor(\Lambda)$ (and $\del(\Lambda)$)
can be obtained by the periodic copies of 
the Voronoi cell (and star) of the origin $\vor(0)$ (and $\del(\Lambda)$). 
For the rest of the section, we denote by $\str(0)$ the star of the origin $\str(0,\Lambda)$,
when $\Lambda$ is clear from the context. 

First we formally state the notion of protection as defined in~\cite{bdg-delaunay}.
Consider a finite point set $P$ in $\R^d$.

\begin{defi}[$\alpha$-protection of a simplex]
 A simplex $\sigma\in \del(P)$
 is $\alpha$-protected if $\exists c\in \vor(\sigma)$ such that, for
 all $p \in \sigma$ and $q \in P\setminus \sigma$, we have
 $\distance{q}{c}\ge \distance{p}{c}+\alpha$.
\label{def:protection}
\end{defi}

\begin{defi}[$\alpha$-protection of a triangulation]
A triangulation $T$ of $P$ is said to be $\alpha$-protected for a non-negative
real $\alpha$, if
\[
\alpha=\mathrm{sup} \{\beta\ge 0 \mid \forall \sigma \in T, \sigma \text{ is $\beta$-protected}\}.
\]
\label{def:protection-triangulation}
\end{defi}

For a lattice $\Lambda$ in $\R^{d}$, observe that if all the
$d$-simplices in $\str(0)$ are $\alpha$-protected then 
$\del\left( \Lambda \right)$ is $\alpha$-protected. 

Adapting a result of Delaunay from \cite{Delaunay34} one can show that 
if all the
$d$-simplices in $\str(0)$ are $\alpha$-protected for some 
$\alpha > 0$, then $\del(\Lambda)$ is a triangulation of $\R^{d}$. 
A lattice $\Lambda$ is {\em degenerate} if $\del\left(\Lambda\right)$ is $0$-protected. 
Observe that if a lattice $\Lambda$ is degenerate then $\del(\Lambda)$ 
contains simplices with combinatorial dimension greater than $d$.   

Using \cite[Lemma 5.27]{boissonnat_chazal_yvinec_2018} and Theorem
\ref{theorem:lambda-lattice-bounds}, we get

\begin{theorem}
\label{theorem:lambda-protection}
Let $\Lambda$ be a lattice in $\R^{d}$ such that $\del(\Lambda)$ is
$\delta$-protected. Then
\[\delta \leq \lambda_{1}(\Lambda) \leq \sqrt{d} \det(\Lambda)^{1/d}.
\]
\end{theorem}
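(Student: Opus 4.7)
The second inequality is just Theorem \ref{theorem:lambda-lattice-bounds} applied to $\Lambda$, so the whole content of the proof lies in establishing the first inequality $\delta \leq \lambda_{1}(\Lambda)$.

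My plan is to pick a shortest non-zero lattice vector $v\in\Lambda$, so $\norm{v}=\lambda_{1}(\Lambda)$, and to locate a $d$-simplex $\sigma$ in $\str(0,\Lambda)$ which does not have $v$ as a vertex. Once such a $\sigma$ is in hand, I apply Definition \ref{def:protection} to $\sigma$ with the choices $p=0\in\sigma$ and $q=v\in \Lambda\setminus\sigma$: if $c$ denotes the circumcenter of $\sigma$ (so $c\in\vor(\sigma)$ and $\norm{c}=R(\sigma)$), the $\delta$-protection hypothesis gives
\[
\distance{v}{c} \;\geq\; \distance{0}{c}+\delta \;=\; \norm{c}+\delta,
\]
and then the reverse triangle inequality $\norm{v}\geq \distance{v}{c}-\norm{c}$ yields $\norm{v}\geq \delta$, which is exactly what we want.

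The one genuine step is producing the simplex $\sigma$ with $v\notin\sigma$. If $v$ is not a Delaunay neighbour of $0$, then no simplex in $\str(0,\Lambda)$ contains $v$ and any $d$-simplex of $\str(0,\Lambda)$ does the job (such a simplex exists because $\delta>0$ forces $\del(\Lambda)$ to be a genuine triangulation of $\R^{d}$, as noted in the text). In the remaining case, I would exploit the translational symmetry of $\Lambda$: if some $d$-simplex $\sigma_{0}\in\del(\Lambda)$ contains both $0$ and $v$, then $\sigma_{0}-v \in \del(\Lambda)$ contains both $-v$ and $0$, so $-v$ is also a Delaunay neighbour of $0$. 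Picking $\sigma=\sigma_{0}-v\in\str(0,\Lambda)$, the three points $-v,0,v$ are collinear and hence affinely dependent, so $\sigma$ cannot contain all of them; since it already contains $-v$ and $0$, it must omit $v$.

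The main obstacle is precisely this last existence argument: one needs the lattice's inversion symmetry together with the elementary observation that a $d$-simplex cannot have three collinear vertices. Everything else reduces to unrolling the definition of protection and applying the triangle inequality, and Lemma 5.27 of \cite{boissonnat_chazal_yvinec_2018} is presumably used to package this "protection bounds edge length" implication into a clean statement, after which Minkowski's theorem provides the upper bound on $\lambda_{1}(\Lambda)$.
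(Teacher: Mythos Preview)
Your argument is correct. The paper itself does not give a detailed proof of this theorem: it simply states that the result follows from Lemma~5.27 of \cite{boissonnat_chazal_yvinec_2018} together with Theorem~\ref{theorem:lambda-lattice-bounds} (Minkowski). Your write-up therefore supplies precisely the content that the paper outsources to the citation: you correctly identify that the second inequality is Minkowski, and for the first you produce a $d$-simplex $\sigma\in\str(0,\Lambda)$ not containing a shortest vector $v$, then apply the protection inequality at the (unique) point $c\in\vor(\sigma)$ together with the triangle inequality to get $\lambda_{1}(\Lambda)=\norm{v}\geq\delta$. The existence step---using the lattice translation $\sigma_{0}\mapsto\sigma_{0}-v$ and the fact that $-v,0,v$ are collinear and hence cannot all be vertices of a simplex---is a clean way to handle the case where $v$ is a Delaunay neighbour of $0$, and is not spelled out in the paper. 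One minor remark: you do not actually need $c$ to be the circumcenter; any $c\in\vor(\sigma)$ witnessing the protection already satisfies $\distance{0}{c}=\distance{p}{c}$ for every $p\in\sigma$, which is all your triangle-inequality step uses.
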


\subsection{Permutahedral Lattice}
\label{subsection:permuta}

Before we explore the Permutahedral lattice, we look at a closely related
lattice, the $A_d$ lattice: this is a $d$-dimensional lattice consisting
of the set of points $(x_1,\ldots,x_{d+1})\in \Z^{d+1}$ which satisfy 
$\sum_{i=1}^{d+1} x_i=0$. 
This lattice resides in the hyperplane $\sum_{i=1}^{d+1} y_i = 0$. 
Let us call this hyperplane $H$. 
One can observe that $A_d=H \cap \Z^{d+1}$.
For more details see~\cite{stanford-tech}. 

The $A_d^*$ lattice, also known as the Permutahedral lattice~\cite{csb-book} 
is the dual lattice to $A_d$. 
This means that it consists of points $\vec y = (y_1,\ldots,y_{d+1})\in H$ 
such that $\vec y\cdot\vec x \in \Z, \forall x\in A_d$.
Note that $A_d \subset A_{d}^*$, both lie in $H$, and contain the origin. 
The vertices of the Voronoi cell of the origin consist of all permutations of 
coordinates of the point 
\[s=\frac{1}{2(d+1)}(d,d-2,d-4,\ldots,-d+2,-d).\]
For this reason, this polytope is also called the \emph{permutahedron}
and lends the name \emph{permutahedral lattice} to $A_d^*$.
The permutahedron has precisely $(d+1)!$ vertices, each with the same norm. 
The representative vectors of $A_d^*$ lattice are of the form
\[
g_k=\frac{1}{d}( \underbrace{d+1-k,\ldots,d+1-k}_{k}, 
\underbrace{-k,\ldots,-k}_{d+1-k} ), 
\]
for $1\le k\le d$. 
This means that any point of $A_d^*$ can be expressed in the form 
$\sum m_k g_k$, where $ m_k\in\Z$ for all 
$k \in \{1, \, \dots, \, d\}$.

It was shown in~\cite{stanford-tech,ckr-polynomial} that $A_d^*$ is in general 
position, and that $Del(A_d^*)$ consists of congruent $d$-simplices. 
This means that it offers non-zero protection for its simplices.
Elementary calculations show that the Delaunay radius is
\[
R_{del}=\sqrt{\frac{d(d+2)}{12(d+1)}}.
\]
Let $R'$ denote the quantity 
\[
R':=\sqrt{\frac{d(d+2)}{12(d+1)}+\frac{2}{d+1}}
\]
Recently in~\cite{ckw-coxeter}, it was shown that

\begin{theorem}
 \label{theorem:astarpro}
 Protection for $Del(A_d^*)$ is 
 $Pro:=R'-R_{del}=
 \sqrt{\frac{d(d+2)}{12(d+1)}+\frac{2}{d+1}}-\sqrt{\frac{d(d+2)}{12(d+1)}}$.
\end{theorem}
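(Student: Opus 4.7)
The plan is to pick the Delaunay simplex $\sigma$ of $A_d^*$ whose circumcenter is the Voronoi vertex $s$, verify that its circumradius is $R_{del}$, and then show that every lattice point outside $\sigma$ lies at distance at least $R'$ from $s$, with equality attained. Since all simplices of $\del(A_d^*)$ are congruent, it suffices to treat one simplex. Using the orthogonal projection $\pi\colon\R^{d+1}\to H$, $\pi(x) = x - \bar{x}\vec{1}$ with $\bar{x}=\tfrac{1}{d+1}\sum_i x_i$, one checks that $A_d^* = \pi(\mathbb{Z}^{d+1})$. Set $v_k := \pi(e_1+\cdots+e_k)$ for $k=0,1,\ldots,d$. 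A direct calculation shows that the $d+1$ coordinates of $v_k-s$ form a permutation of the coordinates of $-s$, whence $\|v_k - s\| = \|s\| = R_{del}$ for every $k$, so $\sigma := \{v_0,\ldots,v_d\}$ is a Delaunay simplex with circumcenter $s$ and circumradius $R_{del}$.

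To measure distances from $s$ to other lattice points, I would lift $s$ to $\tilde{s} := s + \tfrac12\vec{1}\in\R^{d+1}$. Its coordinates $\tilde{s}_1 > \tilde{s}_2 > \cdots > \tilde{s}_{d+1}$ all lie in $(0,1)$ with consecutive gaps $\tfrac{1}{d+1}$, and a short expansion using $\overline{\tilde{s}}=\tfrac12$ yields the identity
\[
\|\pi(x)-s\|^2 \;=\; \|x-\tilde{s}\|^2 - (d+1)\bigl(\bar{x}-\tfrac12\bigr)^2, \qquad x\in\R^{d+1}.
\]
Since $\pi(x+\vec{1})=\pi(x)$, every lattice point arises from some $x\in\mathbb{Z}^{d+1}$ with $k:=\sum_i x_i\in\{0,1,\ldots,d\}$; the subtracted term depends only on $k$, so it suffices to find, for each fixed $k$, the two smallest values of $\|x-\tilde{s}\|^2$ subject to $x\in\mathbb{Z}^{d+1}$ and $\sum_i x_i=k$.

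The main technical step is an exchange argument: because $\tilde{s}_i \in (0,1)$ and $\max_{i,j}|\tilde{s}_i - \tilde{s}_j| = \tfrac{d}{d+1} < 1$, any $x$ with a coordinate outside $\{0,1\}$ admits indices $i,j$ for which a swap $x \mapsto x - e_i + e_j$ preserves $\sum_i x_i$ and strictly decreases $\|x-\tilde{s}\|^2$; hence every minimizer lies in $\{0,1\}^{d+1}$. Among $\{0,1\}$-vectors with $\sum_i x_i=k$, the quantity $\|x-\tilde{s}\|^2 = \sum_i \tilde{s}_i^2 + k - 2\sum_{i:\,x_i=1}\tilde{s}_i$ is uniquely minimized by placing the ones at the $k$ largest coordinates of $\tilde{s}$, i.e.\ $x = e_1+\cdots+e_k$, which recovers $v_k$ at distance $R_{del}$. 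The next-best vector swaps the smallest selected coordinate $\tilde{s}_k$ with the largest unselected one $\tilde{s}_{k+1}$, incurring the minimal possible increase $2(\tilde{s}_k - \tilde{s}_{k+1}) = \tfrac{2}{d+1}$. Applying the identity above, at every level $k$ the second-nearest lattice point lies at squared distance $R_{del}^2 + \tfrac{2}{d+1} = (R')^2$ from $s$, with equality attained. Therefore the protection of $\del(A_d^*)$ equals $R' - R_{del}$.
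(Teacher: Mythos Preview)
Your argument is essentially correct and takes a genuinely different route from the paper. The paper does not prove Theorem~\ref{theorem:astarpro} directly---it cites~\cite{ckw-coxeter}---but it recovers the same value as the special case $\delta=\tfrac{1}{\sqrt{d+1}}$ of its distorted-grid analysis (Section~\ref{subsection:dgpro}). There the proof proceeds in $\R^d$: it fixes a Freudenthal simplex $\sigma_\delta$, computes its circumcenter $C_\delta$ explicitly, identifies the $d+1$ ``opposite vertices'' $n(\sigma_\delta)$ across the facets, and then argues geometrically (Lemma~\ref{lemma:propoint}) that these opposite vertices realize the minimum distance, using that the circumballs of the neighboring simplices shell the boundary of $B(\sigma_\delta)$ together with a collinearity/barycentric contradiction. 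Your approach instead exploits the description $A_d^*=\pi(\Z^{d+1})$, lifts the circumcenter to $\tilde{s}\in(0,1)^{d+1}$, and reduces everything to a transparent integer optimization stratified by the coordinate sum $k$; the exchange argument then replaces the paper's geometric covering step. Your route is more combinatorial and arguably cleaner for the single lattice $A_d^*$, while the paper's machinery is built to handle the whole one-parameter family $T_\delta(\Z^d)$ simultaneously.

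One small gap to patch: you show every \emph{minimizer} at level $k$ lies in $\{0,1\}^{d+1}$, but then tacitly assume the \emph{second} nearest point at each level is also a $\{0,1\}$-vector when you pass to the swap $\tilde s_k\leftrightarrow\tilde s_{k+1}$. This is not automatic (and at $k=0$ there is only one $\{0,1\}$-vector). The fix is immediate from your own exchange step: with $x_i\ge 2$, $x_j\le 0$ (or the symmetric case), the decrease is
\[
2\bigl(1-(x_i-x_j)+(\tilde s_i-\tilde s_j)\bigr)\;\le\;2\Bigl(1-2+\tfrac{d}{d+1}\Bigr)\;=\;-\tfrac{2}{d+1},
\]
so every $x\notin\{0,1\}^{d+1}$ with $\sum_i x_i=k$ already satisfies $\|x-\tilde s\|^2\ge M_k+\tfrac{2}{d+1}$, where $M_k$ is the level-$k$ minimum. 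Combined with the $\{0,1\}$-swap witnesses for $1\le k\le d$, this gives both the lower bound $R'$ on non-vertex distances and its attainment, completing the argument.
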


\begin{corollary}
\label{cor:astarnormpro}
 The normalized protection of the $A^*_d$ lattice is the ratio of the
 protection to the Delaunay radius, that is,
 $Pro/R_{del}=\sqrt{\frac{d^2+2d+24}{d^2+2d}}-1=O\left(\frac{1}{d^2}\right)$.
\end{corollary}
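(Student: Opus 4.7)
The plan is a direct algebraic simplification followed by a Taylor-style estimate, using only the closed form expressions for $Pro$ and $R_{del}$ supplied by Theorem~\ref{theorem:astarpro}. There is essentially no geometric content beyond what is already in that theorem.

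First I would form the ratio $Pro/R_{del} = R'/R_{del} - 1$ and work with $R'^2/R_{del}^2$ to avoid nested square roots. Since $R_{del}^2 = \frac{d(d+2)}{12(d+1)}$ and $R'^2 = R_{del}^2 + \frac{2}{d+1}$, we have
\[
\frac{R'^2}{R_{del}^2} \;=\; 1 + \frac{2/(d+1)}{d(d+2)/(12(d+1))} \;=\; 1 + \frac{24}{d(d+2)} \;=\; \frac{d^2+2d+24}{d^2+2d}.
\]
Taking square roots and subtracting $1$ gives the stated closed form
$Pro/R_{del} = \sqrt{\tfrac{d^2+2d+24}{d^2+2d}} - 1$.

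For the asymptotic bound, I would use the standard identity $\sqrt{1+x}-1 = \frac{x}{\sqrt{1+x}+1}$ for $x \geq 0$, which immediately gives $\sqrt{1+x}-1 \leq x/2$. Applying this with $x = \frac{24}{d(d+2)}$ yields
\[
\frac{Pro}{R_{del}} \;\leq\; \frac{12}{d(d+2)} \;=\; O\!\left(\frac{1}{d^2}\right),
\]
and a matching lower bound of the same order follows from $\sqrt{1+x}-1 \geq x/2 - x^2/8$ for small $x$, so the ratio is $\Theta(1/d^2)$ and in particular $O(1/d^2)$.

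There is no real obstacle here; the only thing to be careful about is bookkeeping the factor $12(d+1)$ when combining the two terms under the square root, and verifying that the $(d+1)$ factors cancel cleanly, which they do. Thus the corollary reduces to the two lines above.
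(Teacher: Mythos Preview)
Your proposal is correct and is exactly the kind of immediate algebraic verification the paper intends: the corollary is stated without proof in the paper, as it follows directly from the expressions for $R'$ and $R_{del}$ in Theorem~\ref{theorem:astarpro}. Your computation of $R'^2/R_{del}^2$ and the Taylor-type estimate for $\sqrt{1+x}-1$ are the natural route and need no change.
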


\begin{remark}
\label{rem:powerpro}
 The power protection\footnote{For more details on power protection
 refer to~\cite{BoissonnatDGO17}.} 
 of $A^*_d$ is $R'^2-R_{del}^2=\frac{2}{d+1}$.
\end{remark}

\subsection{Diagonal distortion and Freudenthal triangulation}
\label{subsection:diagdist}

For a point set $P\in\R^d$, a diagonal distortion is a 
perturbation along the diagonal direction $(1,\ldots,1)$. 
Formally, the diagonal distortion of a vector 
$\vec x \in \R^d$ denoted by $T_{\delta}(x)$ is defined as:
\begin{equation}
\label{equation:dgds}
 T_{\delta}(\vec x) = \vec x - \left(\frac{1-\delta}{d}\right)\Delta(\vec x)\vec 1 ,
\end{equation}
where $\Delta(\vec x)=\sum_{i=1}^{d}x_i$, $\vec 1 = (1,\ldots,1)$ and 
$\delta\in \R$ is the distortion parameter.
This distortion was introduced by Edelsbrunner and Kerber in~\cite{ek-dual}, 
to build and study a family of lattices.

Here $\Delta^{-1}(0)$ denotes the hyperplane passing through the origin, 
which is normal to the vector $\vec 1$, that is, it is simply 
the hyperplane $H$.
$|\Delta(x)|$ is $\sqrt{d}$ times the height of $x$ from $H$.
Also for $\delta=1$, the linear transformation $T_{1}$ is the identity map, while for $\delta=0$,
it projects points on to $H$. For $0<\delta<1$, the transformation moves each point
closer to $H$, where the distance moved is proportional to the height of
the point from $H$, as evident from Equation~\eqref{equation:dgds}. 

In~\cite{ek-dual}, the authors built a family of lattices by setting $P:=\Z^d$.
Each $\delta\in\R$ gives a lattice, which we call a \emph{distorted grid}.
It is thus natural to talk about distorted cubes of the distorted grid,
which are images of a cube of $\Z^d$ under the transformation $T_{\delta}$.

\begin{figure}
\begin{center}
\includegraphics[width=0.350\textwidth]{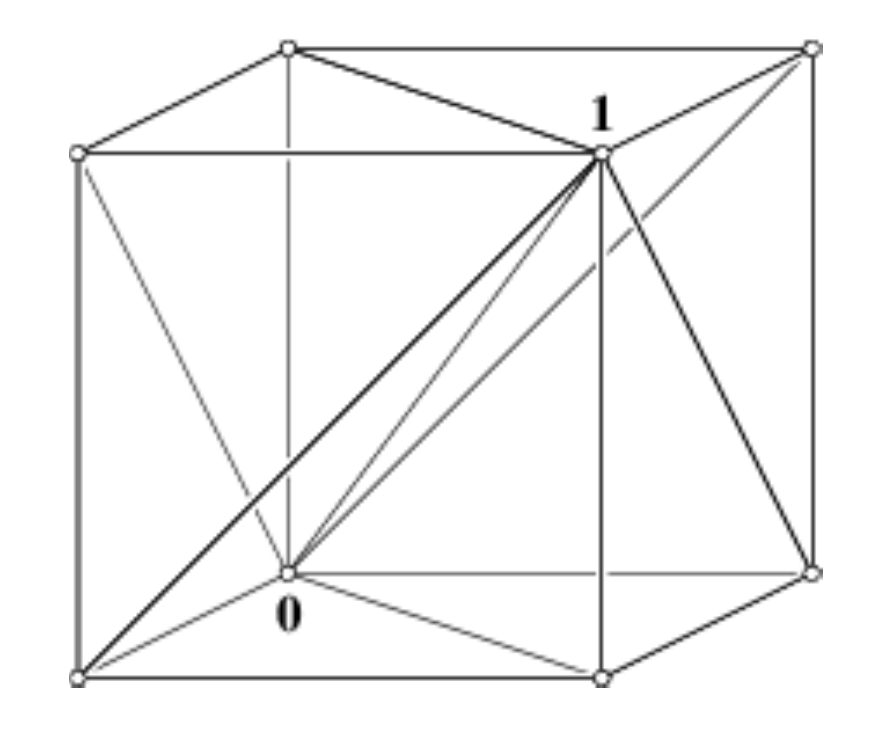}
\end{center}
\caption{Figure shows Freudenthal triangulation of the 3-cube,
figure from \cite{ek-diag}. In the figure $\mathbf{0} = \left( 0, \, 0,
  \, 0\right)$ and $\mathbf{1} = (1,\, 1, \, 1)$.}
\label{figure:freudenthal}
\end{figure}

Let $\square$ denote the $d$-cube $[0,1]^d$.
A \emph{monotone chain} on $\square$ is a sequence of a subset of 
its vertices such that
their coordinates are in strictly increasing lexicographic order.
More precisely, a sequence of vertices $(p_0,\ldots,p_k)\subset \{0,1\}^d$ is a 
monotone chain if for each pair $p_i=(v_1,\ldots,v_d),p_j=(w_1,\ldots,w_d)$ 
with $i<j$, it holds that $v_m\le w_m$ $\forall m\in[1,\ldots,d]$.
Each monotone chain can be interpreted as a simplex, which is the convex hull 
of its vertices.
It turns out that the collection of all simplices defined by monotone chains
triangulates the cube $\square$.
This triangulation is more commonly known as the 
\emph{Freudenthal triangulation}~\cite{freud-tgl} 
of the $d$-cube (also known as the Kuhn subdivision~\cite{kuhn-subdiv}).
This contains precisely $d!$ $d$-simplices~\cite{ek-dual}.
See Figure~\ref{figure:freudenthal} for a three-dimensional example.

In~\cite{ek-dual}, the authors show that for any $0<\delta\le 1$, 
the Delaunay triangulation of a distorted cube of $\Z^d$ is 
combinatorially equivalent to the Freudenthal triangulation 
of the unit $d$-cube. 
For $\delta=1$, the Delaunay triangulation of the distorted grid is degenerate, 
but for each $0<\delta<1$, it remains non-degenerate and combinatorially
stays the same~\cite{ek-dual}.

\subsection{Summary of Contributions}
\label{subsection:contributions}

\paragraph{Distorted grid and the permutahedral lattice}
Our first result is an interesting relation between the
distorted grids and the permutahedral lattices.
Since the distorted grid $T_0(\Z^d)$ resides in the hyperplane $H$, 
it is a $(d-1)$-dimensional point set. 
Also, we know that $A_{d-1}^*$ is a $(d-1)$-dimensional lattice, residing in $H$.
We show that

\begin{lemma}
\label{lemma:diagbasic}
 $T_0(\Z^d)$ is the $A_{d-1}^*$ lattice.
\end{lemma}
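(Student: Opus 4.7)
The plan is to establish the equality of the two lattices by proving inclusion in both directions. The map $T_0$ is the orthogonal projection of $\R^d$ onto the hyperplane $H = \{y \in \R^d : \sum y_i = 0\}$, since $T_0(\vec x) = \vec x - \frac{1}{d}\Delta(\vec x)\vec 1$ and $\Delta(T_0(\vec x)) = 0$ by direct computation. So $T_0(\Z^d)$ is indeed a subset of $H$, which is where $A_{d-1}^*$ also lives (using the paper's definition with $d$ replaced by $d-1$, so $A_{d-1} = \Z^d \cap H$ and $A_{d-1}^* = \{y \in H : y \cdot x \in \Z \text{ for all } x \in A_{d-1}\}$).

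For the forward inclusion $T_0(\Z^d) \subseteq A_{d-1}^*$, I would pick any $x \in \Z^d$ and any $y \in A_{d-1}$ and compute
\[
T_0(x) \cdot y \;=\; x\cdot y - \tfrac{1}{d}\Delta(x)(\vec 1 \cdot y) \;=\; x\cdot y,
\]
which lies in $\Z$ because $x, y \in \Z^d$, and $\vec 1 \cdot y = 0$ since $y \in H$. This shows that every projected integer point satisfies the dual-lattice condition.

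For the reverse inclusion $A_{d-1}^* \subseteq T_0(\Z^d)$, given $y = (y_1,\ldots,y_d) \in A_{d-1}^*$, I would use that the vectors $e_i - e_d$ for $i = 1,\ldots,d-1$ lie in $A_{d-1}$, so $y \cdot (e_i - e_d) = y_i - y_d \in \Z$. Setting $k_i := y_i - y_d \in \Z$ for $i < d$ and $k_d := 0$, and using $\sum y_i = 0$ to recover $y_d = -\tfrac{1}{d}\sum_{i<d}k_i$, I would let $x := (k_1,\ldots,k_{d-1},0) \in \Z^d$ and verify that $T_0(x) = y$ by a direct coordinate-wise computation, since $\Delta(x) = \sum_{i<d} k_i = -d y_d$. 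This exhibits $y$ as $T_0$ of an integer vector, completing the proof.

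The main conceptual obstacle is the reverse inclusion, but once one recognizes that $A_{d-1}$ has the obvious integer basis $\{e_i - e_d\}_{i=1}^{d-1}$, the dual-lattice membership of $y$ immediately forces all coordinate differences $y_i - y_d$ to be integers, and the construction of a preimage becomes routine. No further obstacles are expected; the argument only uses the definitions of $T_0$, $A_{d-1}$, and $A_{d-1}^*$ stated in the paper.
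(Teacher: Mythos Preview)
Your proof is correct. The forward inclusion $T_0(\Z^d)\subseteq A_{d-1}^*$ is handled exactly as in the paper: both compute $T_0(x)\cdot z = x\cdot z - \tfrac{1}{d}\Delta(x)\,\vec{1}\cdot z = x\cdot z \in \Z$ using $\vec{1}\cdot z = 0$.

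The reverse inclusion is where your argument genuinely differs from the paper's. The paper works on the $A_{d-1}^*$ side: it takes the known generating vectors $g_i = \tfrac{1}{d}(d-i,\ldots,d-i,-i,\ldots,-i)$ of $A_{d-1}^*$, writes an arbitrary $y\in A_{d-1}^*$ as $\sum k_i g_i$, defines $X_j = \sum_{i=j}^{d-1}k_i$, and then verifies coordinate by coordinate that $T_0(X)=y$. You instead work on the $A_{d-1}$ side: you test $y$ against the obvious $\Z$-basis $\{e_i-e_d\}_{i=1}^{d-1}$ of $A_{d-1}$ to conclude $y_i-y_d\in\Z$, and then the integer preimage $x=(y_1-y_d,\ldots,y_{d-1}-y_d,0)$ drops out immediately. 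Your route is shorter and more self-contained, since it avoids importing the explicit description of the generators of $A_{d-1}^*$ from the literature; the paper's route, on the other hand, makes the connection with the standard presentation of $A_{d-1}^*$ transparent. Both are valid and the verification $T_0(x)=y$ in your version is a one-line check, so there are no gaps.
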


Lemma 2 in~\cite{ek-diag} shows that $T_0(\Z^d)$ is isometric
to $T_{\delta}(\Z^{d-1})$ for $\delta=1/\sqrt{d}$. 
Using Lemma~\ref{lemma:diagbasic} along
with this fact, we arrive at the conclusion that 
\begin{corollary}
\label{cor:diagiso}
$T_{\delta}(\Z^{d})$ is isometric to the $A^\ast_d$ lattice for $\delta=\frac{1}{\sqrt{d+1}}$.
\end{corollary}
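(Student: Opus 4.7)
The plan is to obtain the corollary as a one-line composition of the two ingredients already available, after re-indexing both so their ambient dimensions match the statement.

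First, I would invoke Lemma \ref{lemma:diagbasic} with $d$ replaced by $d+1$. Since that lemma says $T_0(\Z^d) = A_{d-1}^*$ as subsets of the hyperplane $H \subset \R^d$, substituting $d+1$ for $d$ gives $T_0(\Z^{d+1}) = A_d^*$, both viewed inside the hyperplane $\sum_{i=1}^{d+1} y_i = 0$ in $\R^{d+1}$. This is the permutahedral lattice we want to land on.

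Second, I would apply the cited Lemma 2 of \cite{ek-diag}, which states $T_0(\Z^d) \cong T_{1/\sqrt{d}}(\Z^{d-1})$, again with $d+1$ in place of $d$. That produces an isometry
\[
T_0(\Z^{d+1}) \;\cong\; T_{1/\sqrt{d+1}}(\Z^{d}).
\]
Composing this with the identification from the previous step gives $T_{1/\sqrt{d+1}}(\Z^{d}) \cong T_0(\Z^{d+1}) = A_d^*$, which is exactly the claimed isometry.

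There is essentially no real obstacle beyond index bookkeeping. The only point that deserves care is the shift $d \mapsto d+1$ in both inputs: Lemma \ref{lemma:diagbasic} naturally drops the $A^*$-index by one relative to the grid dimension, and Lemma 2 of \cite{ek-diag} naturally drops the grid dimension by one while producing a $\sqrt{d}$. Performing both shifts simultaneously yields the $A_d^*$ lattice on one end and the parameter $\delta = 1/\sqrt{d+1}$ on the other, with no further computation required.
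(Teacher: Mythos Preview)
Your proposal is correct and matches the paper's own argument essentially verbatim: the paper also derives the corollary by combining Lemma~\ref{lemma:diagbasic} with Lemma~2 of \cite{ek-diag} after the same $d \mapsto d+1$ index shift. There is nothing to add.
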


In the light of the above result, we add a complementary observation:
\begin{lemma}
\label{lemma:otherlattice}
$T_{\delta}(\Z^{d})$ is isometric to the $A_d$ lattice 
for $\delta=\sqrt{d+1}$.
\end{lemma}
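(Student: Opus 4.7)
The plan is to exhibit an explicit linear isometry $L\colon \R^d \to H \subset \R^{d+1}$ that maps $T_\delta(\Z^d)$ bijectively onto $A_d$ when $\delta=\sqrt{d+1}$, where as before $H = \{\vec y \in \R^{d+1} : \sum y_i = 0\}$. The natural candidate for the underlying combinatorial bijection $\Z^d \to A_d$ is the map $\phi(x_1,\ldots,x_d) := (x_1,\ldots,x_d,-\sum_{i=1}^d x_i)$, which is a lattice isomorphism because $\{e_i - e_{d+1}: 1\le i\le d\}$ is a basis of $A_d$. I would then try to realize $L := \phi \circ T_\delta^{-1}$ as an honest Euclidean isometry.

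To analyze $T_\delta$, first decompose any $\vec x \in \R^d$ as $\vec x = \vec x_\perp + c \vec 1$ with $\Delta(\vec x_\perp) = 0$ and $c = \Delta(\vec x)/d$. Plugging into Equation~\eqref{equation:dgds} yields $T_\delta(\vec x) = \vec x_\perp + \delta c \vec 1$, so $T_\delta$ fixes $H$ pointwise and scales the $\vec 1$-component by $\delta$. Hence
\[
\norm{T_\delta(\vec x)}^2 = \norm{\vec x_\perp}^2 + \delta^2 c^2 d.
\]
On the other hand, a direct expansion gives
\[
\norm{\phi(\vec x)}^2 = \sum_{i=1}^d x_i^2 + \Bigl(\sum_{i=1}^d x_i\Bigr)^2 = \norm{\vec x}^2 + \Delta(\vec x)^2 = \norm{\vec x_\perp}^2 + c^2 d + c^2 d^2.
\]

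Comparing the two expressions, the norms agree for every $\vec x \in \R^d$ exactly when $\delta^2 d = d(d+1)$, i.e.\ when $\delta = \sqrt{d+1}$, which is precisely the value in the statement. For this $\delta$, the linear map $L = \phi\circ T_\delta^{-1}\colon \R^d \to \R^{d+1}$ preserves norms on all of $\R^d$ and is therefore a linear isometry onto its image, which is the hyperplane $H$. Since $L$ carries $T_\delta(\Z^d)$ bijectively onto $\phi(\Z^d) = A_d$, the two lattices are isometric, as claimed. I do not anticipate any serious obstacle: the proof reduces to the two norm computations above and to verifying that $\phi$ is a lattice isomorphism onto $A_d$, the latter being immediate from the explicit basis.
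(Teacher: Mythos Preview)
Your argument is correct. Both your proof and the paper's establish the isometry by comparing the metric structure of the two lattices via a linear map, so the overall strategy is the same; the execution differs slightly. The paper picks the basis $\{T_\gamma(e_i)\}$ for $T_\gamma(\Z^d)$ and the ``consecutive-difference'' basis $u_i=e_i-e_{i+1}$ for $A_d$, and then checks that the two Gram matrices coincide (equal norms and equal pairwise dot products), which abstractly guarantees the existence of an isometry. You instead write down an explicit candidate isometry $L=\phi\circ T_\delta^{-1}$ using the alternative $A_d$-basis $e_i-e_{d+1}$, and verify directly that it preserves \emph{all} norms via the orthogonal decomposition $\vec x=\vec x_\perp+c\vec 1$. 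Your route has the minor advantage that the value $\delta=\sqrt{d+1}$ falls out of a single equation $\delta^2 d = d(d+1)$ rather than being verified after the fact, and it produces the isometry explicitly; the paper's Gram-matrix check is a shade shorter to write out. Either way the content is the same.
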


\paragraph{Protection for distorted grids}
We calculate the protection values for distorted grids, when the 
distortion parameter lies in the range $\delta\in (0,1]$.
Let $R_\delta$ denote the Delaunay radius for the parameter $\delta$. 
Then,

\begin{theorem}
\label{theorem:diagpro}
 The normalized protection values for the diagonally distorted lattice are
\begin{equation}
\Pro_\delta/R_\delta=
 \begin{cases}
 
 \sqrt{
 \frac{\delta^4(d^2-1)+\delta^2(d^2-22)+d^2+23}
 {\delta^4(d^2-1)+\delta^2(d^2+2)+d^2-1}
 } -1
  \approx \frac{24(1-\delta^2)}{d^2(\delta^4+\delta^2+1)}
 
 & \text{when $\frac{1}{\sqrt{d+1}} < \delta \le 1$}\\
 
 \sqrt{\frac{d^2+2d+24}{d^2+2d}}
 -1
 \approx \frac{24}{d^2}
 
 & \text{when $\delta=\frac{1}{\sqrt{d+1}}$}\\
 
 \sqrt{
 \frac{\delta^4(d^2-1)+\delta^2(d^2+24d+2)+d^2-1}
 {\delta^4(d^2-1)+\delta^2(d^2+2)+d^2-1}
 } -1
  \approx \frac{24\delta^2}{d(\delta^4+\delta^2+1)}
   
 & \text{when $0<\delta<\frac{1}{\sqrt{d+1}}$}\
 \end{cases}. 
\end{equation}
\end{theorem}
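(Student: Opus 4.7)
The plan is to fix a canonical top-dimensional Delaunay simplex $\sigma_\delta$ in $T_\delta(\Z^d)$, compute its circumcenter $c_\delta$ and circumradius $R_\delta$ as explicit functions of $\delta$, and then find the distance from $c_\delta$ to the nearest lattice point outside $\sigma_\delta$. By the Edelsbrunner--Kerber result that the Delaunay triangulation of each distorted cube is combinatorially the Freudenthal triangulation, and since $T_\delta$ commutes with coordinate permutations, every top-dimensional simplex is congruent to the image of the identity monotone chain; so we may take $\sigma_\delta$ to have vertices $v_i = T_\delta(e_1+\cdots+e_i)$ for $i=0,\ldots,d$.

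The first step is to compute $c_\delta$ and $R_\delta$ by exploiting that $T_\delta$ acts as the identity on $H=\Delta^{-1}(0)$ and as scalar multiplication by $\delta$ on the $\vec 1$-direction. Writing $c_\delta = c_H + t\vec 1$ with $c_H \in H$, the circumcenter equations $\|c_\delta - v_i\|^2 = \text{const}$ decouple into a $\delta$-independent system for $c_H$ (which by Lemma~\ref{lemma:diagbasic} is controlled by the $A_{d-1}^*$ geometry and pins $c_H$ to a specific rational point of $H$) and a single scalar equation for $t$ that absorbs all of the $\delta$-dependence. Solving yields closed forms whose common denominator $\delta^4(d^2-1)+\delta^2(d^2+2)+d^2-1$ is exactly the one appearing in all three cases of the theorem.

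Next I would enumerate candidate nearest outside lattice points. The combinatorial explicitness of the Freudenthal triangulation determines the facet-adjacent Delaunay simplices of $\sigma_\delta$, and the opposite vertices of these give a short list of $O(d)$ candidates; these split into three geometric classes: the \emph{upper} extension $T_\delta(e_1+\cdots+e_d+e_j)$ past $v_d$, the \emph{lower} extension $T_\delta(-e_j)$ past $v_0$, and the \emph{swap} vertices $T_\delta(e_1+\cdots+e_{i-1}+e_{i+1})$ for $0<i<d$. Direct calculation of the squared distances from $c_\delta$ to representatives of each class shows that on $\delta\in(0,1]$ only the upper and lower classes can achieve the minimum; comparing these two rational functions of $\delta^2$ locates a crossover at exactly $\delta^2(d+1)=1$ — the $A_d^*$ isometry point of Corollary~\ref{cor:diagiso}, where Theorem~\ref{theorem:astarpro} confirms that both distances equal $R'$.

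This directly gives the three cases: the lower candidate governs the protection for $\delta>1/\sqrt{d+1}$, both agree at $\delta=1/\sqrt{d+1}$ (recovering the $A_d^*$ value), and the upper candidate governs it for $0<\delta<1/\sqrt{d+1}$. Subtracting $R_\delta$ from the minimum candidate distance and dividing by $R_\delta$ yields the normalized protection, and a Taylor expansion in $1/d$ produces the stated asymptotics. The main obstacle is not the circumcenter calculation (routine linear algebra once the decomposition along $\vec 1$ is set up) but rather verifying that the swap-class candidates are never strictly nearer to $c_\delta$ than the upper or lower ones throughout $(0,1]$; the cleanest way to dispatch this is a monotonicity argument in $\delta$ anchored at $\delta=1$, where the integer-grid circumball is degenerate and all three classes coincide on the sphere.
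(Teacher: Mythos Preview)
Your overall architecture matches the paper's: fix the canonical Freudenthal simplex, compute $C_\delta$ and $R_\delta$, list the $(d+1)$ facet-opposite lattice points, and compare their distances to $C_\delta$. But the comparison you sketch is wrong in a way that would derail the computation. The ``upper'' and ``lower'' candidates are exchanged by the involution $x\mapsto \vec 1 - x$, which fixes $C_\delta$; hence their distances to $C_\delta$ are \emph{equal} for every $\delta$, and there can be no crossover between them. In power-protection terms one gets $E_{\text{upper}}=E_{\text{lower}}=2\delta^2$ while $E_{\text{swap}}=\tfrac{2}{d}(1-\delta^2)$, so the actual crossover is between the pair $\{\text{upper},\text{lower}\}$ and the swap class, and it occurs exactly at $\delta^2(d+1)=1$. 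For $\delta>1/\sqrt{d+1}$ the \emph{swap} points are nearest and produce the numerator $\delta^4(d^2-1)+\delta^2(d^2-22)+d^2+23$; for $\delta<1/\sqrt{d+1}$ the upper/lower pair is nearest and produces $\delta^4(d^2-1)+\delta^2(d^2+24d+2)+d^2-1$. Your plan to show ``the swap-class candidates are never strictly nearer'' would therefore fail, and would in fact prove the opposite of what is needed on the interval $(1/\sqrt{d+1},1]$.

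Your proposed anchor at $\delta=1$ is also off: at $\delta=1$ the circumball of the Freudenthal simplex is the circumball of the whole cube, so only the swap points lie on that sphere (they are cube vertices), while the upper/lower points sit strictly outside at squared distance $R_1^2+2$. This is precisely why the swap class wins near $\delta=1$. Once you correct which class governs on which side of $1/\sqrt{d+1}$, the remaining issue is ruling out lattice points beyond the $(d+1)$ facet neighbours; the paper handles this with a separate geometric lemma (projecting a hypothetical nearer point onto the circumball and using that the neighbouring circumballs cover $\partial B(\sigma_\delta)$), which your proposal does not address.
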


\paragraph{Thickness and Aspect ratio}
We further calculate the thickness and aspect ratio of the distorted
grid for $\delta\in (0,1]$.
\begin{theorem}
\label{theorem:thickaspect}
Let $\Theta_\delta$ denote the thickness, and $\Gamma_\delta$ denote
the aspect ratio of the distorted grid at parameter $\delta$.
Then,
\begin{equation}
\Theta_\delta=
 \begin{cases}
  \frac{1}{\delta\sqrt{2d}}
  & \text{for $1 \ge \delta\ge \frac{1}{\sqrt{2}}$}
  \\
  \frac{\sqrt{2-2\delta^2}}{\sqrt{d}} 
  & \text{for $\frac{1}{\sqrt{2}} \ge \delta \ge \frac{1}{\sqrt{d+1}}$}
  \\
  \frac{2\delta\sqrt{1-\delta^2}}{\sqrt{\delta^2d-\delta^2+1}} 
  & \text{for $\frac{1}{\sqrt{d+1}} \ge \delta > 0$}
 \end{cases}.
\end{equation}
and
\begin{equation}
 \Gamma_\delta=
 \begin{cases}
  \frac{\sqrt{3d}}{\sqrt{2}\sqrt{3\delta^2 d^2 + (1-\delta^2)^2(d^2-1)}}
  & \text{for $1 \ge \delta \ge \frac{1}{\sqrt{d+1}}$}
  \\
 \frac{\delta d\sqrt{3}}{\sqrt{\delta^2d-\delta^2+1}\sqrt{3\delta^2 d^2 + (1-\delta^2)^2(d^2-1)}} & 
 \text{for $\frac{1}{\sqrt{d+1}} \ge \delta > 0$} \\
 \end{cases}.
\end{equation}
\end{theorem}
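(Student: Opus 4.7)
The plan is to fix a single Delaunay simplex of the distorted grid and compute its heights, longest edge, and circumradius directly, exploiting the structure of the diagonal distortion $T_\delta$. Since $T_\delta$ commutes with permutations of the coordinate axes and, by~\cite{ek-dual}, the Delaunay triangulation of $T_\delta(\Z^d)$ is the image under $T_\delta$ of the Freudenthal triangulation of $\Z^d$, all Delaunay simplices are congruent. It therefore suffices to work with the simplex $\sigma_\delta$ whose vertices are $u_j = T_\delta(v_j)$ for $v_j = e_1 + \cdots + e_j$, $j = 0, \ldots, d$, and to compute $\Theta(\sigma_\delta)$ and $\Gamma(\sigma_\delta)$.

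The first key observation is that $T_\delta$ fixes the hyperplane $H = \vec 1^\perp$ pointwise and scales by $\delta$ along $\vec 1$, since $T_\delta \vec 1 = \delta \vec 1$. Hence for any $0/1$-vector $w$ with exactly $k$ coordinates equal to $1$, splitting into its $H$-part and $\vec 1$-part yields $\|T_\delta w\|^2 = (k - k^2/d) + \delta^2 (k^2/d) = k - (1-\delta^2)k^2/d$. An edge of $\sigma_\delta$ from $u_i$ to $u_j$ corresponds to $w = v_j - v_i$ with $k = |j-i|$, so the longest edge length $\eta_\delta$ follows by maximizing this expression over $k \in \{1, \ldots, d\}$: for $\delta \ge 1/\sqrt{2}$ the quadratic is monotone on $[1,d]$ and the maximum is at $k = d$, giving $\eta_\delta = \delta \sqrt{d}$; for smaller $\delta$ the maximum occurs at $k^\ast = d/(2(1-\delta^2))$, giving $\eta_\delta^2 = d/(4(1-\delta^2))$.

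For the heights I would use the complementary principle that if $n \in H$ is orthogonal to a linear subspace $W$, then $n$ is also orthogonal to $T_\delta(W)$, because $T_\delta w - w \in \R \vec 1$ for every $w$. For an interior vertex $u_i$ with $0 < i < d$, the vector $e_i - e_{i+1}$ lies in $H$ and a direct check shows it is orthogonal to every $v_k - v_l$ with $k,l \neq i$; it therefore remains a normal to the face of $\sigma_\delta$ opposite $u_i$, and projecting $T_\delta(e_i)$ onto it yields $D_{u_i}(\sigma_\delta) = 1/\sqrt{2}$. For an endpoint $u_0$, the undistorted normal $e_1$ does not lie in $H$; I would instead solve for a combination $n_0 = \delta e_1 + \frac{1-\delta}{d}\vec 1$ that is orthogonal to every $T_\delta(v_j - v_1)$ and project $u_1 - u_0$ onto it, obtaining $D_{u_0}(\sigma_\delta) = \delta \sqrt{d}/\sqrt{(d-1)\delta^2 + 1}$; the same value holds for $u_d$ by the symmetry $v_i \leftrightarrow v_{d-i}$. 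Comparing these two heights, $D_{u_0} \le 1/\sqrt{2}$ exactly when $\delta \le 1/\sqrt{d+1}$, which is the threshold separating the last two cases of the theorem.

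Combining $D_{\min}$ with $\eta_\delta$ in each of the three regimes gives the stated formulas for $\Theta_\delta$. For the aspect ratio it remains to compute the circumradius $R(\sigma_\delta)$. I would set up the linear system $2\langle c, u_j\rangle = \|u_j\|^2$ for $j = 1,\ldots,d$ using $u_0 = 0$, subtract consecutive equations to obtain a clean linear recurrence for the coordinates of $c$, and simplify symmetrically to get $R(\sigma_\delta)^2 = [3\delta^2 d^2 + (1-\delta^2)^2(d^2-1)]/(12d)$; substituting $\delta = 1/\sqrt{d+1}$ should recover the permutahedral Delaunay radius $R_{del}^2 = d(d+2)/(12(d+1))$ from Theorem~\ref{theorem:astarpro} as a sanity check. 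Dividing $D_{\min}$ by $2R$ then yields the two cases for $\Gamma_\delta$, with a single formula covering $[1/\sqrt{d+1},1]$ because $D_{\min} = 1/\sqrt{2}$ throughout that range. The main obstacle is the bookkeeping across the three regimes together with the algebraic simplification of $R^2$; the conceptual key that keeps the calculations tractable is the observation that normals lying in $H$ survive $T_\delta$, which reduces most of the geometric work to a computation inside the undistorted Freudenthal simplex.
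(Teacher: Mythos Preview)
Your proposal is correct and follows essentially the same route as the paper's proof in Section~\ref{subsection:otherquality}: fix the standard Freudenthal simplex, compute the heights $h^i_\delta$ (obtaining $1/\sqrt{2}$ for interior vertices and $\delta\sqrt{d}/\sqrt{(d-1)\delta^2+1}$ for the two endpoints, with crossover at $\delta=1/\sqrt{d+1}$), compute the edge lengths $l_x$ by maximizing $dx-(1-\delta^2)x^2$ (with crossover at $\delta=1/\sqrt{2}$), and divide. The only noteworthy differences are cosmetic: you motivate the height and edge computations via the decomposition of $T_\delta$ as the identity on $H=\vec 1^\perp$ and scaling by $\delta$ along $\vec 1$, which lets you read off the interior-vertex normals $e_i-e_{i+1}$ without writing down the hyperplane equations explicitly, whereas the paper simply states those equations; and for the circumradius you propose to solve the linear system $2\langle c,u_j\rangle=\|u_j\|^2$ directly, while the paper imports the identity $R_\delta^2=\delta^2 R_1^2+(1-\delta^2)^2 R_0^2$ from~\cite{ek-diag} (already recorded in Section~\ref{subsection:dgpro}) to reach the same value $R_\delta^2=[3\delta^2 d^2+(1-\delta^2)^2(d^2-1)]/(12d)$.
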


One can see that the protection increases monotonically in the range 
$\delta\in \left(0,\frac{1}{\sqrt{d+1}}\right)$ and decreases monotonically to 0 in the range
$\delta\in \left(\frac{1}{\sqrt{d+1}},1\right]$.
The maximum protection value is attained at $\delta=\frac{1}{\sqrt{d+1}}$.
Similarly, the thickness and aspect ratio are also maximized for  
$\delta=\frac{1}{\sqrt{d+1}}$.

\begin{corollary}
From Corollary~\ref{cor:diagiso} we know that at parameter $\delta=\frac{1}{\sqrt{d+1}}$,
the distorted lattice is isometric to the $A^\ast$ lattice.
Thus, the values of protection, thickness and aspect ratio that are maximized
at this parameter agree with the results in~\cite{ckw-coxeter}, that was achieved
through an alternate analysis.	
\end{corollary}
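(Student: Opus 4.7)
The plan is to observe that this final corollary is not a result requiring fresh geometric arguments, but rather a consistency check: it asserts that evaluating Theorems~\ref{theorem:diagpro} and~\ref{theorem:thickaspect} at $\delta=1/\sqrt{d+1}$ gives the same quantities as were derived in~\cite{ckw-coxeter,ckr-polynomial} for the permutahedral lattice. Since thickness, aspect ratio and (normalized) protection depend only on Euclidean distances between lattice points, they are invariant under isometries of $\R^d$. Hence, once Corollary~\ref{cor:diagiso} identifies $T_{1/\sqrt{d+1}}(\Z^d)$ with $A_d^\ast$ up to isometry, the three quality measures must match on the nose.

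The first step I would take is to record this invariance explicitly, perhaps as a one-line observation: for any simplex $\sigma$ and isometry $\phi$, $D_p(\sigma)$, $\eta(\sigma)$, $R(\sigma)$ and the protection radius of Definition~\ref{def:protection} are all preserved by $\phi$, so $\Theta$, $\Gamma$ and $\Pro/R_{\mathrm{del}}$ are invariants of the lattice up to isometry. Combined with Corollary~\ref{cor:diagiso}, this immediately implies that $\Pro_{1/\sqrt{d+1}}/R_{1/\sqrt{d+1}}$ equals the normalized protection of $A_d^\ast$, and similarly for $\Theta_{1/\sqrt{d+1}}$ and $\Gamma_{1/\sqrt{d+1}}$.

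The second step is the explicit algebraic verification that the piecewise formulas of Theorems~\ref{theorem:diagpro} and~\ref{theorem:thickaspect} do reproduce the values in Corollary~\ref{cor:astarnormpro} at $\delta=1/\sqrt{d+1}$. For protection, this is essentially tautological since the middle branch of Theorem~\ref{theorem:diagpro} is already stated as $\sqrt{(d^2+2d+24)/(d^2+2d)}-1$, matching Corollary~\ref{cor:astarnormpro}. For thickness and aspect ratio, one checks that $1/\sqrt{d+1} \le 1/\sqrt{2}$ for $d\ge 1$, so both quantities are evaluated in the regime $\delta \le 1/\sqrt{d+1}$ at the boundary: the thickness formula $\frac{2\delta\sqrt{1-\delta^2}}{\sqrt{\delta^2 d - \delta^2 + 1}}$ specializes to $\frac{2}{\sqrt{d+1}}\cdot\sqrt{\frac{d}{d+1}} / 1 = \frac{2\sqrt{d}}{d+1}$, and analogously for $\Gamma_\delta$; these values should agree with the Coxeter-triangulation analysis of~\cite{ckw-coxeter}.

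There is no real obstacle here; the only mild care needed is at the case boundary $\delta = 1/\sqrt{d+1}$, where one should check that the two adjacent branches of each piecewise formula agree (so that the formulas are continuous and the value at the boundary is well-defined), and that the thickness boundary $1/\sqrt{2}$ is strictly above $1/\sqrt{d+1}$ for all $d\ge 2$ so we are unambiguously in the claimed branch. Thus the corollary follows directly from Corollary~\ref{cor:diagiso}, the isometry invariance of the three measures, and a short computation matching the formulas in Theorems~\ref{theorem:diagpro} and~\ref{theorem:thickaspect} with Theorem~\ref{theorem:astarpro} and Corollary~\ref{cor:astarnormpro}.
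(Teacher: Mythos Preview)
Your approach is correct and matches the paper's treatment: the corollary is stated without an explicit proof, as an immediate consequence of Corollary~\ref{cor:diagiso} together with Theorems~\ref{theorem:diagpro} and~\ref{theorem:thickaspect}, which is precisely the reasoning you spell out.

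One arithmetic slip to fix: at $\delta=1/\sqrt{d+1}$ the denominator $\delta^2 d-\delta^2+1$ equals $\frac{d-1}{d+1}+1=\frac{2d}{d+1}$, not $1$, so the thickness evaluates to $\sqrt{2}/\sqrt{d+1}$ rather than $2\sqrt{d}/(d+1)$; you can confirm this by checking that the adjacent branch $\sqrt{2-2\delta^2}/\sqrt{d}$ gives the same value at the boundary.
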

In Figure~\ref{figure:plots} we plot the quality measures of the distorted grid
for a few dimensions.

\begin{figure}[H]
\centering
\includegraphics[width=0.425\textwidth]{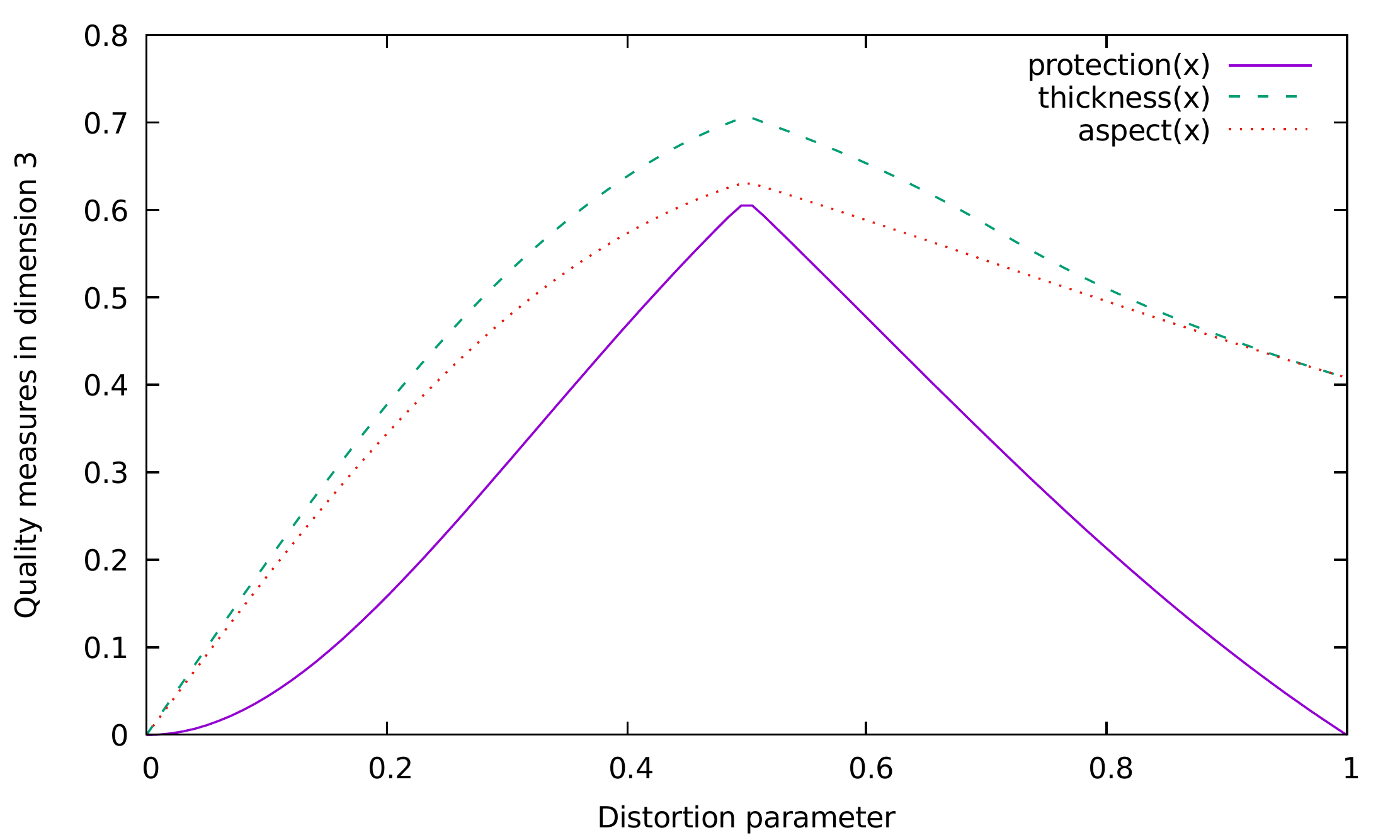}
\includegraphics[width=0.425\textwidth]{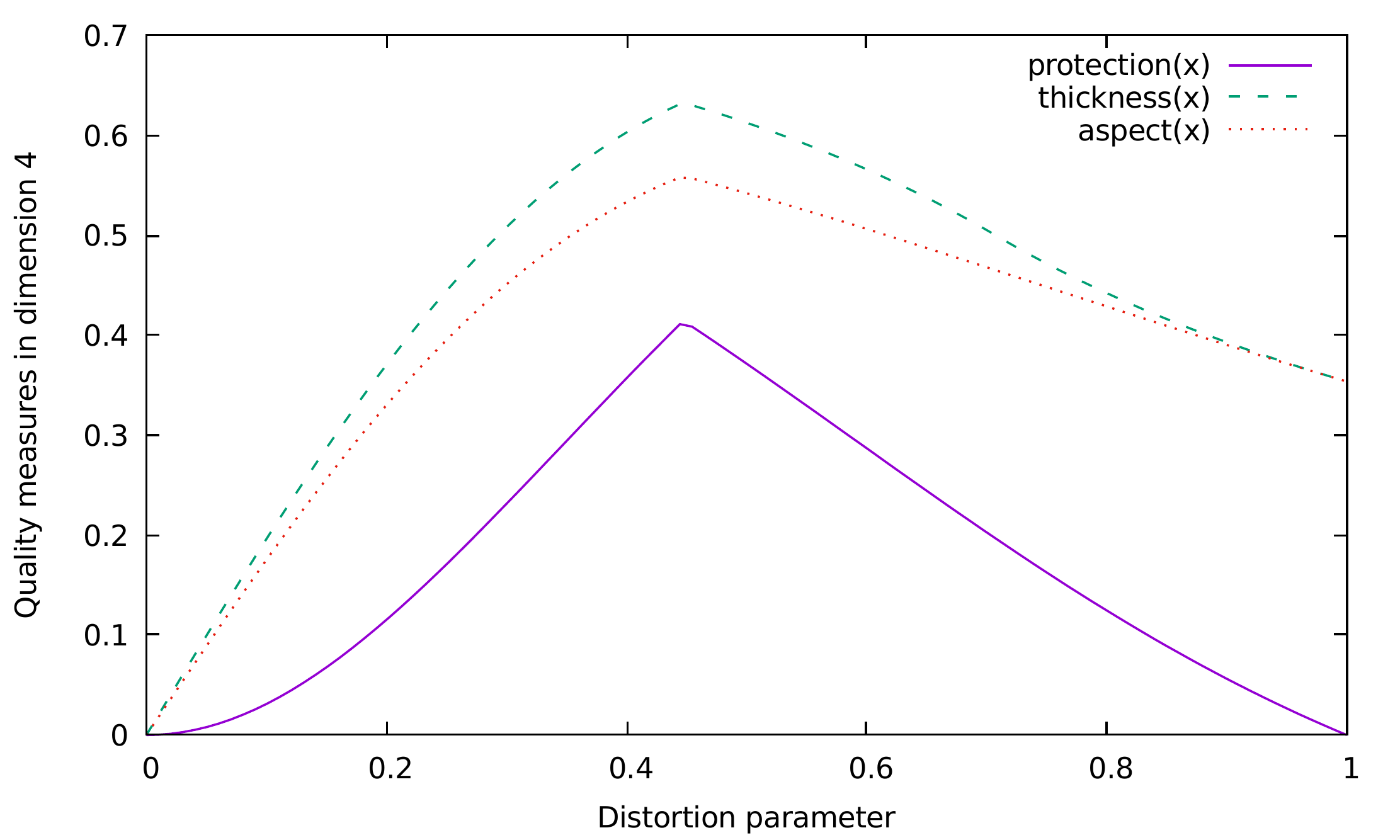}
\includegraphics[width=0.425\textwidth]{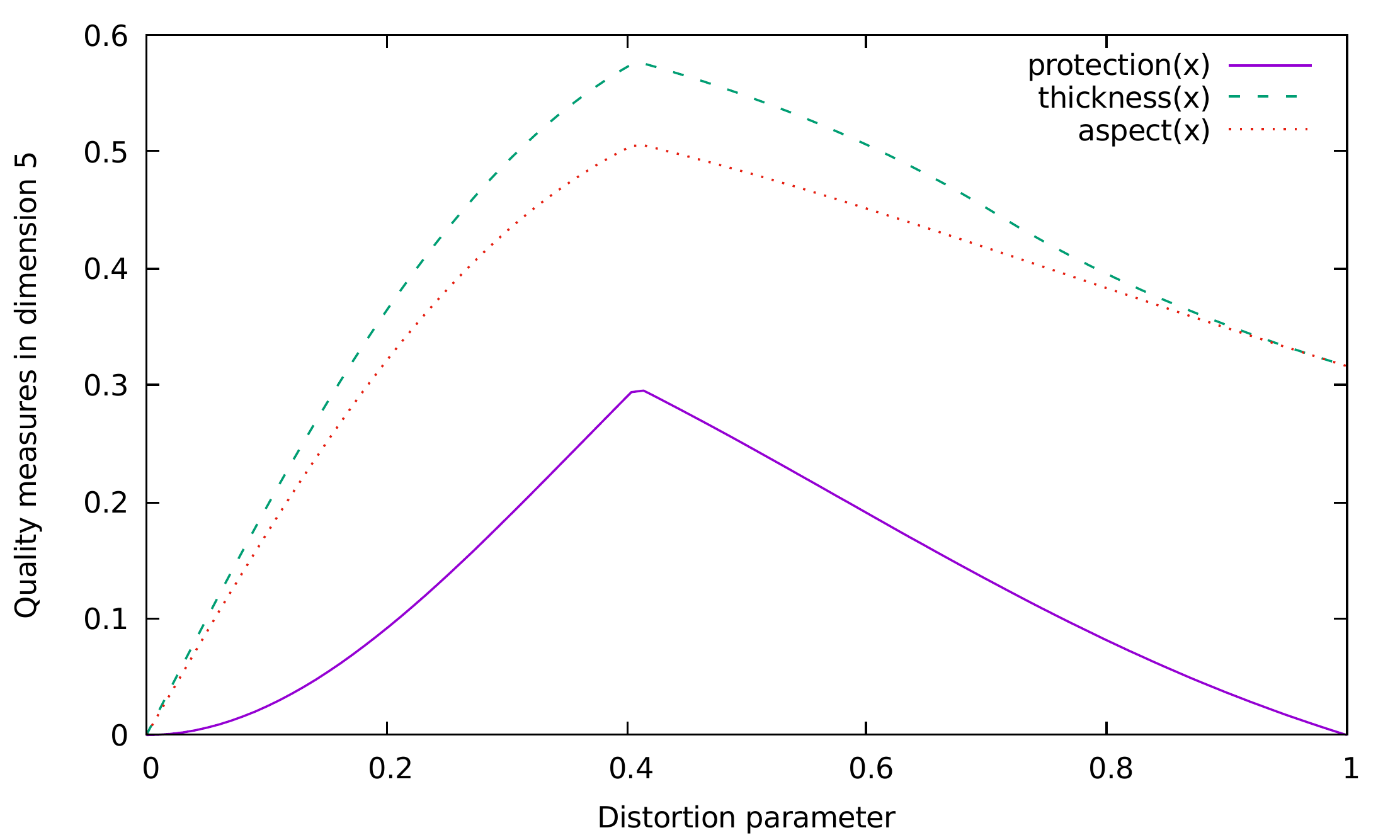}
\includegraphics[width=0.425\textwidth]{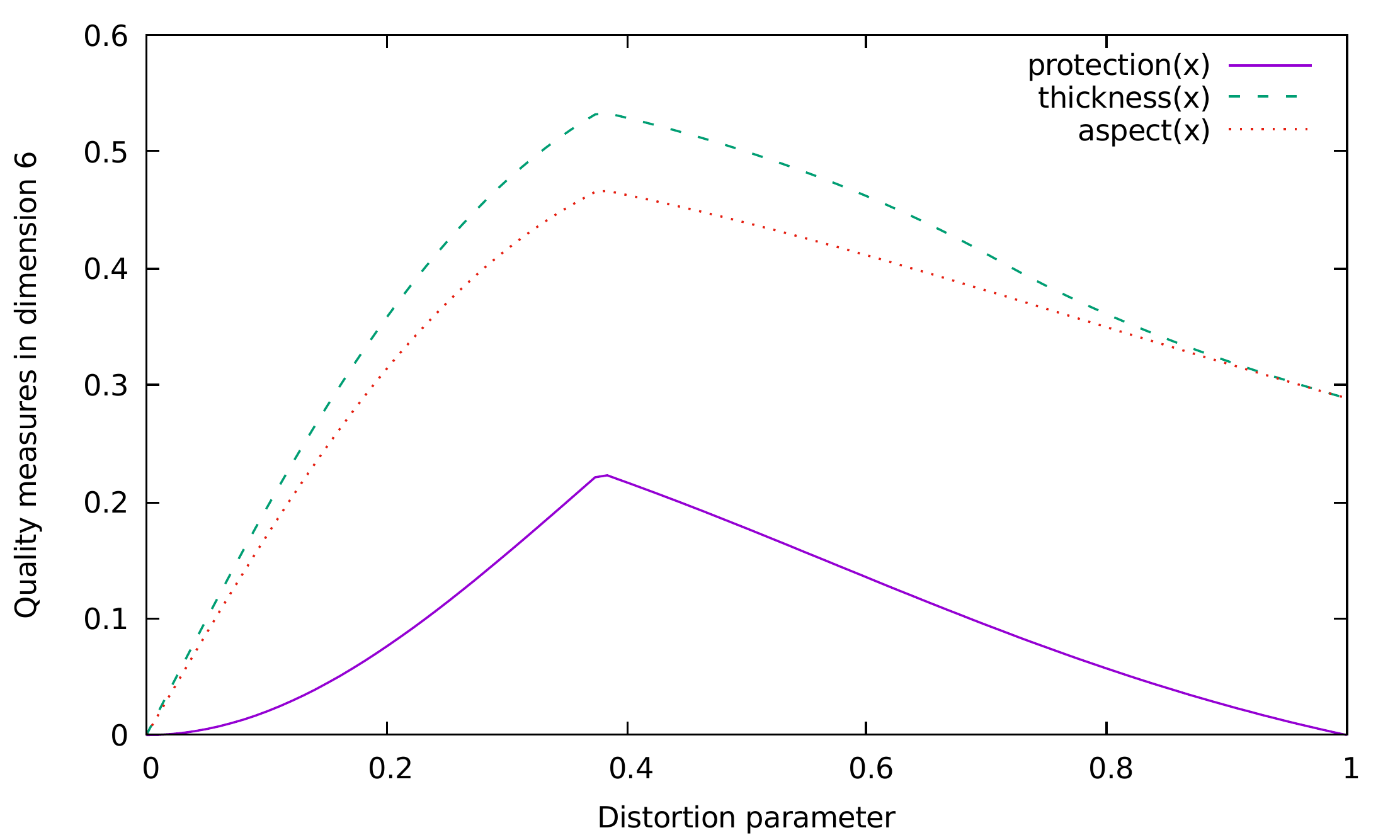}
\includegraphics[width=0.425\textwidth]{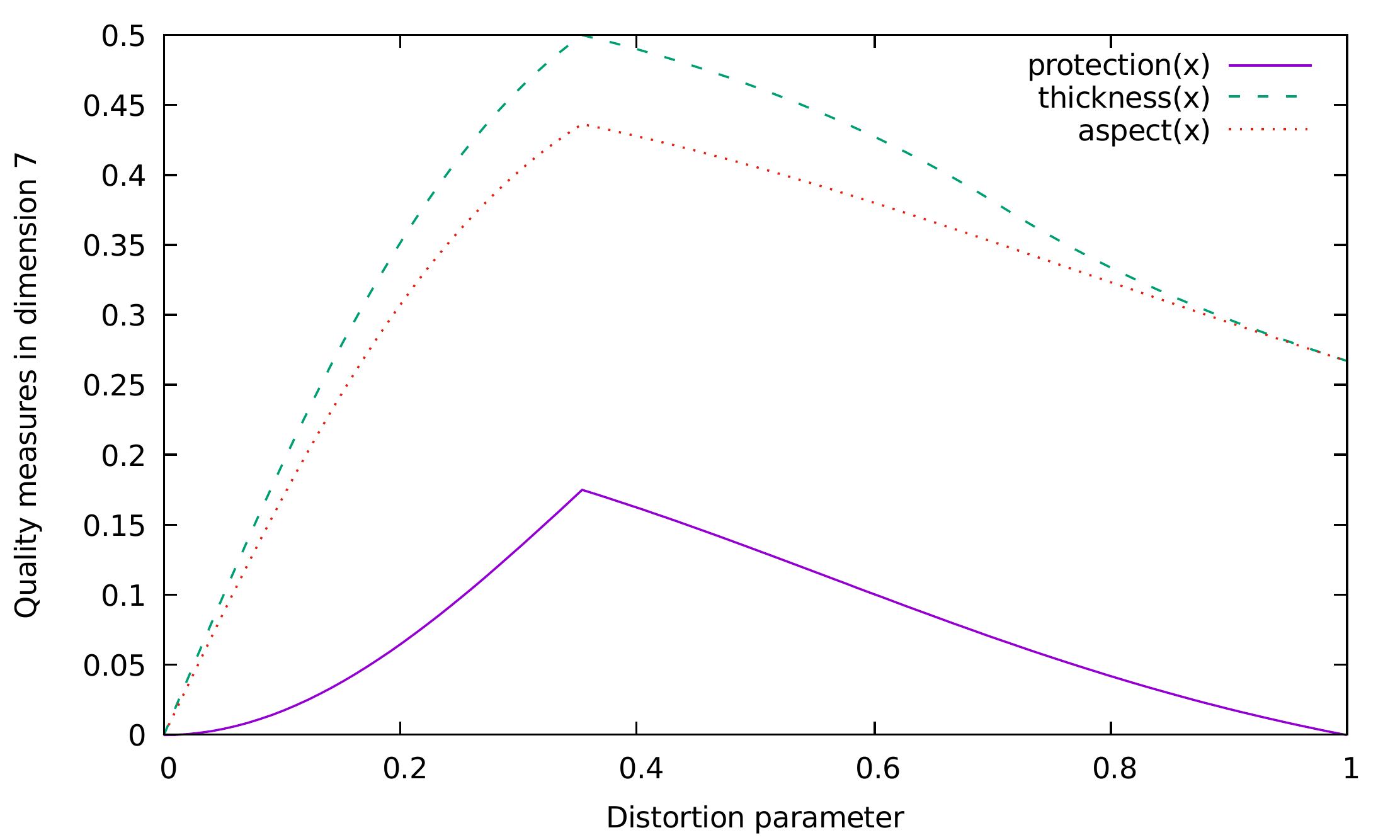}
\includegraphics[width=0.425\textwidth]{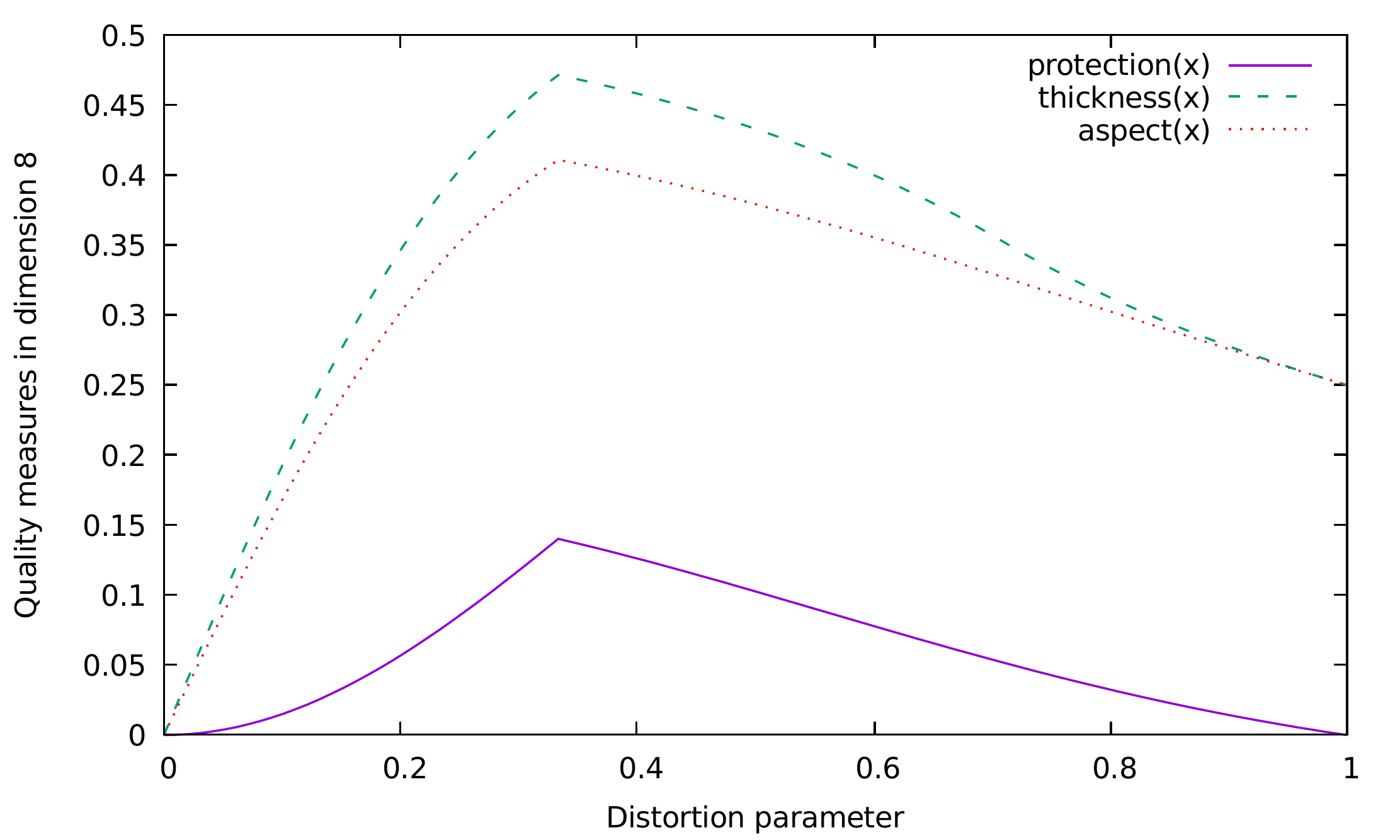}
\caption{Protection, thickness and aspect ratios of the distorted grid 
for a few dimensions.
Each parameter is maximized when the grid 
is distorted into the $A^\ast$ lattice.
}
\label{figure:plots}
\end{figure}
\section{Properties of Diagonally Distorted lattices}
\label{section:diagprotection}

\subsection{Proof of Lemma~\ref{lemma:diagbasic}}

\begin{proof}
We prove the claim in two steps, first by showing that 
$T_0(\Z^d) \subseteq A_{d-1}^*$, and then showing that 
$A_{d-1}^*  \subseteq T_0(\Z^d)$, which implies the result.

$T_0(\Z^d) \subseteq A_{d-1}^*$: consider any point $x=(x_1,\ldots,x_d)\in\Z^d$. 
Let $\vec x'= T_0(x)=\vec x - \frac{1}{d}\Delta(x)\vec 1$, 
which means that $x'_i = x_i-\frac{\sum_{j=1}^{d} x_j}{d}$, $\forall \, 1\le i\le d$.
Recall the definition of $A_{d-1}^*$ lattice: it consists of all points 
$y\in \R^d$ such that $\vec y\cdot \vec z$ is an 
integer for each $\vec z\in A_{d-1}$. 
Also, $\Delta(\vec z)=0$ by definition.
Now consider the dot product $\vec x'\cdot \vec z$,
\begin{eqnarray*}
\vec x'\cdot \vec z 
  &=& \sum_{i=1}^{d} \left(x_i - \frac{\sum_{j=1}^{d} x_j}{d}\right)\cdot z_i \\
  &=& \sum_{i=1}^{d} \left(x_i\cdot z_i -\frac{\sum_{j=1}^{d} x_j}{d}\cdot z_i \right)\\
  &=& \sum_{i=1}^{d} x_i\cdot z_i -\left(\frac{\sum_{j=1}^{d} x_j}{d}\right) \sum_{i=1}^{d} z_i \\
  &=& \sum_{i=1}^{d} x_i\cdot z_i,
\end{eqnarray*}
which is an integer since both $\vec{x},\vec{z}\in\Z^d$. 
This holds for all points $\vec z\in A_{d-1}$,
so $x'$ is a point of $ A_{d-1}^*$.
Since each point $x'\in T_0(\Z^d)$ satisfies the membership criteria 
for $A_{d-1}^*$, it holds that  $T_0(\Z^d) \subseteq A_{d-1}^*$.

$A_{d-1}^*  \subseteq T_0(\Z^d)$: for any point $x\in A_{d-1}^*$ to be 
a member of $T_0(\Z^d) $, it must have a special form: 
there must exist a point $X=(X_1,\ldots,X_d)\in\Z^d$ such that $x$ is the
projection of $X$ onto $H$.
Specifically, $x_i=X_i - \frac{\sum_{j=1}^{d} X_j}{d}$, 
must hold for each $i\in\{1,\ldots,d\}$.
We show that $x$ has this special form.

The representative vectors of $A_{d-1}^*$ lattice~\cite[Chap4.]{csb-book} 
are of the form
\[
g_i=\frac{1}{d}( \underbrace{d-i,\ldots,d-i}_{i}, 
\underbrace{-i,\ldots,-i}_{d-i} ), 
\]
for $1\le i \le d-1$.
So $x$ can be uniquely written as the linear combination 
$x=\sum_{i=1}^{d-1}k_i g_i$, where each $k_i$ is an integer. 
Consider the point of $A_d$, $X=(X_1,\ldots,X_d)$ such that 
$X_j=\sum_{i=j}^{d-1}k_i $ for $j\in \{1,\ldots,d\}$ 
(note that $X_d=0$).
We show that $x=T_0(X)$, which proves the claim.

Expanding $x$, we get
\begin{eqnarray*}
x = && \frac{k_1}{d} (d-1,-1,-1,\ldots,-1)+ \\
    && \frac{k_2}{d} (d-2,d-2,-2,\ldots,-2)+ \\
    && \ldots \\
    &&\frac{k_{d-2}}{d} (2,\ldots,2,-2+d,-2+d)+ \\
    &&\frac{k_{d-1}}{d} (1,\ldots,1,-1+d) .
\end{eqnarray*}
Simplifying, we see that 
$x_1=\frac{1}{d}\left(d k_1 - k_1 + d k_2 - 2 k_2 + \ldots + d k_{d-1} - 
(d-1) k_{d-1}\right) $ 
which simplifies to
\[
\frac{1}{d}\{d(k_1 + k_2 +\ldots +k_{d-1})-
\{k_1 + 2 k_2 + \ldots + (d-1) k_{d-1}\}\} = 
\sum_{i=1}^{d-1}k_i-\frac{\sum_{q=1}^{d-1}q k_{q}}{d}. 
\]
Similarly, it is easy to see that 
$x_j=\sum_{i=j}^{d-1}k_i-\frac{\sum_{q=1}^{d-1} q k_{q}}{d}$. 
Hence, $x$ is of the form 
\[
\left(X_1 - \frac{\sum_{i=1}^{d} X_i}{d},X_2 - \frac{\sum_{i=1}^{d} X_i}{d},\ldots,X_{d-1} - 
\frac{\sum_{i=1}^{d} X_i}{d}, X_d-\frac{\sum_{i=1}^{d} X_i}{d}\right)
=T_0(X).
\]
\end{proof}

\paragraph{Proof of Lemma~\ref{lemma:otherlattice}}
The proof idea is to find a bijection between the basis vectors of the
two lattices $T_\gamma(\Z^d)$ and $A_d$ for $\gamma=\sqrt{d+1}$,
such that the bijection preserves norms of the vectors
and the dot products between them.

Specifically, we choose a basis for $T_\gamma(\Z^d)$ as 
$\{T_\gamma(e_1),\ldots,T_\gamma(e_d)\}$, where $\{e_i\}_{i=1\ldots d}$
is the standard basis for $\R^d$.
For $A_d$, the standard basis in $\R^{d+1}$ is 
\[
\{u_1,\ldots,u_d\}:=\{(1,-1,0,\ldots,0),(0,1,-1,0,\ldots,0),\ldots,
(0,\ldots,0,1,-1)\}.
\]
The bijection takes $T_\gamma(e_i)$ to $u_i$ for each $i$.
It is easy to calculate that the norm of each vector is $\sqrt{2}$.
Moreover, it can be verified that 
$T_\gamma(e_i)\cdot T_\gamma(e_j) = u_i\cdot u_j$
for all $1\le i,j\le d$.
As a result, there is an bijection between the two lattices, 
that takes the point $p=\sum_{i=1}^{d}m_i T_\gamma(e_i)$ of $T_\gamma(\Z^d)$ 
where $m_i\in \Z$, to the point $q=\sum_{i=1}^{d}m_i u_i$ of $A_d$.
This bijection is an isometry because of the above conditions.

\subsection{Protection of distorted grids}
\label{subsection:dgpro}

In this sub-section we prove Theorem~\ref{theorem:diagpro} by
studying the protection properties of the family of lattices 
$\{ T_{\delta}(\Z^d) \}_{0< \delta \le 1}$.
Throughout this sub-section, we will assume that $\delta$ is a value in 
this range, if it is not stated explicitly.

Let $\Pro_\delta$ denote the protection of $T_\delta(\Z^d)$. 
For $\delta=1$, the Delaunay triangulation of $T_\delta(\Z^d)$ 
is degenerate and hence offers 0 protection, so $\Pro_1=0$.
We calculate $\Pro_\delta$ for $\delta \in [0,1]$ by first calculating the
protection of a specific simplex in the Delaunay triangulation 
and then showing that all simplices have the same protection.

Let $\{ e_1,\ldots, e_d \}$ denote the standard basis of $\R^d$.
Let $\sigma$ denote the simplex $\sigma=(v^0,\ldots,v^d)$ 
where $v^i$ is the vector sum $v^i=\sum_{j=d+1-i}^{d}e_j$.
That means, $v^0=\underbrace{(0,\ldots,0)}_{d}$, 
$v^d=\underbrace{(1,\ldots,1)}_{d}$ and more generally 
$v^i=(\underbrace{0,\ldots,0}_{d-i},\underbrace{1,\ldots,1}_{i})$.
It is straightforward to see that the simplex $\sigma$ is a part of the 
Freudenthal triangulation of the $d$-cube $[0,1]^d$.
We inspect the distorted transformation of $\sigma$, which we denote by 
$\sigma_{\delta}=T_{\delta}(\sigma)$; this is a Delaunay simplex of
the distorted grid $T_{\delta}(\Z^d)$. 
We calculate the protection for $\sigma_{\delta}$ to determine the value
for $T_{\delta}(\Z^d)$.
To do so, we first find the circumcentre and circumradius of the simplex.

The vertices of the simplex can be written as 
$\sigma_\delta=(v^0_\delta,\ldots,v^d_\delta)$,
where $v^i_\delta=T_{\delta}(v^i)$, $1\le i\le d$. 
In particular, 
\[
v^i_\delta=\frac{1}{d} (\underbrace{-i+i\delta,\ldots,-i+i\delta}_{d-i},
\underbrace{d-i+i\delta,\ldots,d-i+i\delta}_{i}).
\]
We denote the circumcentre of $\sigma_{\delta}$ as $C_{\delta}$ 
and the radius of the circumsphere by $R_{\delta}$.
Setting $\delta=0$, we see that 
$v^i_\delta=\frac{1}{d}(\underbrace{-i,\ldots,-i}_{d-i},
\underbrace{d-i,\ldots,d-i}_{i})$.
It follows that $T_{0}(\sigma)$ is a Delaunay simplex of $A^{\ast}_{d-1}$.
The circumcentre of $\sigma_0$ is 
$C_0:=\frac{1}{2d}(-d+1,-d+3,\ldots,d-3,d-1)$, which is a Voronoi vertex 
of $A^{\ast}_{d-1}$.
Then, the radius of the circumsphere is 
$R_0=\sqrt{\frac{d^2-1}{12d}}$~\cite{ek-diag}.
On the other hand, for $\delta=1$, the simplex is $\sigma_1=\sigma$.
Then the circumcentre is $C_1=(1/2,\ldots,1/2)$ and $R_1=\sqrt{d}/2$.
For intermediate values of $\delta$, from~\cite{ek-diag}
we have the relation that 
\[
R_\delta=\sqrt{\delta^2 R_1^2 + (1-\delta^2)^2 R_0^2},
\]
which can be simplified as 
$R_\delta^2=\frac{\delta^2 d}{4}+\frac{(1-\delta^2)^2(d^2-1)}{12d}.$

We now calculate $C_\delta$.
Note that $C_\delta$ is equidistant from each $v^i_\delta$.
Since $v^0_\delta=(0,\ldots,0)$, we have that for all $1\le i\le d$, 
$(C_\delta-v^0_\delta)^2=C_\delta^2=(C_\delta-v^i_\delta)^2$.
This simplifies to $\dotp{C_\delta}{v^i_\delta}=\frac{|v^i_\delta|^2}{2}$.
So we have a set of $d$ equations,
\begin{eqnarray}
 \left[
 \begin{array}{c}
  \vec{v^1_\delta}\\
  \ldots \\
  \vec{v^d_\delta}\\
 \end{array}
 \right]C_\delta=
 \frac{1}{2}\left[
 \begin{array}{c}
  |\vec{v^1_\delta}|^2\\
  \ldots \\
  |\vec{v^d_\delta}|^2\\
 \end{array}
 \right],
\end{eqnarray}
where $C_\delta$ is a column vector and the rows of the matrix on the left hand
side contain the vertices of the simplex.
Solving the system of equations, it follows that 
\[
C_\delta=\delta C_1 + (1-\delta^2)C_0.
\]
This can be written as
\[
C_\delta=\left[\frac{\delta}{2}+\frac{(1-\delta^2)(-d+1)}{2d},
\frac{\delta}{2}+\frac{(1-\delta^2)(-d+3)}{2d},\ldots,
\frac{\delta}{2}+\frac{(1-\delta^2)(d-1)}{2d}\right].
\]

\begin{remark}
\label{remark:barycentric}
$C_\delta$ can be explicitly written as the barycentric coordinates of 
$\sigma_\delta$ as $C_\delta=\sum_{i=0}^{d}\mu_i v^i_\delta$, where 
$\mu_0,\mu_d=\frac{1+(d-1)\delta^2}{2d}$ and  
$\mu_1,\ldots,\mu_{d-1}=\frac{1-\delta^2}{d}$.
Note that $\forall i, \mu_i > 0$. So the simplex is well-centered, that is,
the circumcenter lies in the interior of the simplex.
\end{remark}

\paragraph{Candidates for protection}
To calculate the protection for $\sigma_\delta$, we find the vertices of $\Z^d$  
which after distortion realize the protection value for $\sigma_\delta$.
Since $\sigma$ is a $d$-simplex, it has $(d+1)$ facets.
Consider such a facet $f$. 
In the Freudenthal triangulation, the facet $f$ has two $d$-simplices 
as co-faces, one being $\sigma$.
Both these simplices are formed by adding a vertex to $f$.
Let $v$ be the vertex of $\sigma$ that when added to $f$ forms $\sigma$,
and let $p$ be the vertex of $\Z^d$ that forms 
the other $d$-simplex with $f$.
We say that $p$ is opposite to $v$.

Each vertex of $\sigma$ has an opposite vertex in $\Z^d$.
We call the set of such opposite vertices as $n(\sigma)$.
Since there are $(d+1)$ facets, so $|n(\sigma)|=d+1$.
Note that since the combinatorial structure of the triangulation does not change 
with a change in $\delta$, $n(\sigma_\delta)=T_\delta(n(\sigma))$.
First we calculate the protection offered by the distortions of the 
points of $n(\sigma)$, and then show that these are precisely the points 
of the distorted grid which define the protection for $\sigma_\delta$.

Given a vertex $v^i$ of $\sigma$, we denote by $p^i$ 
the opposite vertex in $n(\sigma)$.
Using the monotone chain property of Freudenthal triangulation,
it follows that
\begin{equation}
 p^i=
 \begin{cases}
  (2,1,\ldots,1) & i=0 \\
  (\underbrace{0,\ldots,0}_{d-i-1},1,0,\underbrace{1,\ldots,1}_{i-1}) 
	& 1\le i \le d-1 \\
  (0,\ldots,0,-1) & i=d \\
 \end{cases}.
\end{equation}
Note that each $p_i$ is distinct.
We see that $\Delta(p^0)=d+1$, $\Delta(p^i)=i$ for $1\le i \le d-1$, 
and $\Delta(p^d)=-1$, so that
\begin{equation}
 T_\delta(p^i)=
 \begin{cases}
  p^0-\frac{(1-\delta)}{d}(d+1)\vec{1} & i=0 \\
  p^i-\frac{(1-\delta)}{d}(i)\vec{1} & 1\le i \le d-1 \\
  p^d+\frac{(1-\delta)}{d}\vec{1} & i=d \\
 \end{cases}.
\end{equation}
We next define $D_i=|T_\delta(p_i)-C_\delta|-R_\delta$ as the protection 
offered by the point $T_\delta(p^i)$. 
After calculations, we see that $D_0=D_d$ and $D_1=\ldots=D_{d-1}$ for all $\delta$.
Also,
\begin{equation}
 (D_0=D_d)
 \begin{cases}
  > (D_1=\ldots=D_{d-1}) & \text{when $\frac{1}{\sqrt{d+1}} < \delta \le 1$}\\
  = (D_1=\ldots=D_{d-1}) & \text{when $\delta=\frac{1}{\sqrt{d+1}}$}\\
  < (D_1=\ldots=D_{d-1}) & \text{when $0< \delta <\frac{1}{\sqrt{d+1}} $}\\
 \end{cases}.
\end{equation}
This agrees with the observation that for $\delta=\frac{1}{\sqrt{d+1}}$, 
the distorted lattice is isometric with the 
$A^\ast_d$ lattice (Corollary ~\ref{cor:diagiso}) and hence has $(d+1)$ 
points defining the protection (see~\cite{ckw-coxeter}).
So, we can define the offered protection as
\begin{equation}
 \Pro_\delta =
 \begin{cases}
  D_1=\ldots=D_{d-1} & \text{when $\frac{1}{\sqrt{d+1}} < \delta \le 1$}\\
  D_0=D_1=\ldots=D_{d-1}=D_d & \text{when $\delta=\frac{1}{\sqrt{d+1}}$}\\
  D_0=D_d & \text{when $0<\delta<\frac{1}{\sqrt{d+1}}$}\
 \end{cases}.
\end{equation}
The explicit values are
\begin{equation}
\label{equation:rawprotection}
\Pro_\delta=
 \begin{cases}
 
 \sqrt{\frac{\delta^4(d^2-1)+\delta^2(d^2-22)+d^2+23}{12d}}
 -\sqrt{\frac{\delta^4(d^2-1)+\delta^2(d^2+2)+d^2-1}{12d}}
 
 & \text{when $\frac{1}{\sqrt{d+1}} < \delta \le 1$}\\
 
 \sqrt{\frac{d(d+2)}{12(d+1)}+\frac{2}{d+1}}
 -\sqrt{\frac{d(d+2)}{12(d+1)}}
 
 & \text{when $\delta=\frac{1}{\sqrt{d+1}}$}\\
 
 \sqrt{\frac{\delta^4(d^2-1)+\delta^2(d^2+24d+2)+d^2-1}{12d}}
 -\sqrt{\frac{\delta^4(d^2-1)+\delta^2(d^2+2)+d^2-1}{12d}}
 
 & \text{when $0<\delta<\frac{1}{\sqrt{d+1}}$}\
 \end{cases}.  
\end{equation}

Similarly, we define the power protection offered by $T_\delta(p^i)$ as 
$E_i=|T_\delta(p_i)-C_\delta|^2-R_\delta^2$.
Substituting these values in $E_i$, we see that
\begin{remark}
The power protection values are
\begin{equation}
 E_i=
 \begin{cases}
  2\delta^2 & i=0 \\
  \frac{2}{d}(1-\delta^2) & 1\le i \le d-1 \\
  2\delta^2 & i=d \\
 \end{cases}.
\end{equation}
\end{remark}

Next, we show that
\begin{lemma}
 \label{lemma:propoint}
 For any $0<\delta<1$, vertices of $n(\sigma_\delta)$ determine 
 the protection for $\sigma_\delta$.
\end{lemma}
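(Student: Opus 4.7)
The plan is to show that the closest lattice point of $T_\delta(\Z^d) \setminus \sigma_\delta$ to the circumcenter $C_\delta$ lies in $n(\sigma_\delta)$. Since the protection of $\sigma_\delta$ is $\min_q \|q - C_\delta\| - R_\delta$ as $q$ ranges over external lattice points, identifying this minimizer in $n(\sigma_\delta)$ will imply the lemma.

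My key tool is the orthogonal decomposition $\R^d = H \oplus \R\vec{1}$. Since $T_\delta$ acts as the identity on $H$ and as multiplication by $\delta$ on $\R\vec{1}$, and since $C_1 = (\tfrac{1}{2},\ldots,\tfrac{1}{2}) \in \R\vec{1}$ while $C_0 \in H$, expanding $C_\delta = \delta C_1 + (1-\delta^2)C_0$ coordinate-wise yields, for every $z \in \Z^d$,
\begin{equation*}
\|T_\delta(z) - C_\delta\|^2 \;=\; \|z_H - (1-\delta^2) C_0\|^2 \;+\; \frac{\delta^2\bigl(2\Delta(z)-d\bigr)^2}{4d},
\end{equation*}
where $z_H := z - (\Delta(z)/d)\vec{1}$ is the projection of $z$ onto $H$. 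This separates the minimization into a \emph{level} contribution depending only on $k := \Delta(z)$ and a \emph{horizontal} contribution that lives in an $A_{d-1}$-coset of $H$ (via Lemma~\ref{lemma:diagbasic}).

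With this decomposition I would carry out three steps. First, since the level term is quadratic in $(2k-d)$, only a small explicit window of levels $k$ can possibly produce distances smaller than $R_\delta + \min_i D_i$; all others are discarded outright. Second, for each admissible level $k$, I minimize the horizontal term over $\{z \in \Z^d : \Delta(z) = k\}$, which is a closest-vector problem in a translate of $A_{d-1}$ against the target $(1-\delta^2)C_0$. For $k \in \{0,1,\ldots,d\}$ the true optimizer turns out to be the vertex $v^k \in \sigma$, which must be excluded; the second-best integer point at each such level, together with the unique best candidate at the boundary levels $k=-1$ and $k=d+1$, is precisely the opposite vertex $p^k$ dictated by the monotone-chain structure of the Freudenthal triangulation. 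Third, I compare the resulting squared distances across levels and recover the three regimes appearing in Theorem~\ref{theorem:diagpro}, matching the values $D_i$ already tabulated.

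The main obstacle is the discrete horizontal minimization at each level. Rounding the real-valued optimum of $(k/d)\vec{1} + (1-\delta^2)C_0$ to an integer vector with coordinate sum exactly $k$ requires care, because the naive coordinate-wise rounding may violate the sum constraint; the correct optimum is obtained by a single swap away from $v^k$ (producing $p^k$), and one must verify that no further swap improves matters. The argument has to be uniform in $\delta \in (0,1)$, because the relative ordering of $D_0 = D_d$ and $D_1 = \ldots = D_{d-1}$ changes at $\delta = 1/\sqrt{d+1}$. Once this level-by-level check is complete and combined with the explicit bound discarding faraway levels, the minimum over $T_\delta(\Z^d) \setminus \sigma_\delta$ is attained on $T_\delta(n(\sigma)) = n(\sigma_\delta)$, establishing the lemma.
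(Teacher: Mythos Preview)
Your approach is genuinely different from the paper's. The paper argues by contradiction using a geometric ``shell'' argument: the circumballs of the $d+1$ neighbouring Delaunay simplices (each sharing a facet with $\sigma_\delta$) cover the boundary of $B(\sigma_\delta)$, so the radial projection $q$ of a hypothetical closer point $p$ onto $\partial B(\sigma_\delta)$ must be a vertex $v_\delta^i$ (otherwise $p$ would give a smaller protection against one of the neighbouring balls). Collinearity of $C_\delta$, $v_\delta^i$, $p$ then forces the lattice coordinates of $p$ in the basis $\{v_\delta^1,\ldots,v_\delta^d\}$ to be proportional to the barycentric weights $\mu_j$ of $C_\delta$, and an integrality/size argument rules out any such $p$ close enough to beat $n(\sigma_\delta)$. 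Your decomposition into level and horizontal parts is instead a direct computation that avoids this geometric detour; it is more systematic and would generalise more readily, at the cost of having to solve an explicit closest-vector problem in each $A_{d-1}$-coset, whereas the paper's shell argument reduces everything to a handful of rays.

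One point in your plan needs correction. You write that for $k\in\{0,1,\ldots,d\}$ the second-best integer point at level $k$ is $p^k$. But in the paper's indexing $p^k$ is opposite to $v^k$, and $\Delta(p^0)=d+1$, $\Delta(p^d)=-1$, while $\Delta(p^i)=i$ only for $1\le i\le d-1$. So $p^0$ and $p^d$ sit at the boundary levels $d+1$ and $-1$, not at levels $0$ and $d$. Consequently, at levels $0$ and $d$ the second-best point (after excluding $v^0$ and $v^d$) is \emph{not} any $p^k$; you will need an extra check that those second-best points are no closer to $C_\delta$ than $\min_i(D_i+R_\delta)$. This is not hard (the nearest nonzero $A_{d-1}$ vector to $(1-\delta^2)C_0$ is $e_d-e_1$, and one computes its distance directly), but as stated your step~2 misattributes the candidates. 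Once that is repaired, and once the horizontal CVP verification you flag as the ``main obstacle'' is actually carried out, your argument goes through.
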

\begin{proof}
We prove the claim by contradiction. 
Assume that some point $p\not\in n(\sigma_\delta)$ of the lattice determines
the protection for $\sigma_\delta$, that is, it is the closest lattice point 
to $C_\delta$, not counting $\sigma_\delta$'s vertices.
Since all Delaunay $d$-simplices are congruent and have congruent 
neighborhoods, it follows that $p$ and $\sigma_\delta$ is the 
minimal configuration defining the protection.

Let $B(\sigma_\delta)$ be $\sigma_\delta$'s circumball. 
Let $q$ be the projection of $p$ on $B(\sigma_\delta)$.
Next, consider the circumballs of simplices formed using a facet of 
$\sigma_\delta$ and the corresponding reflected opposite vertex 
in $n(\sigma_\delta)$.
These circumballs together form a shell around $B(\sigma_\delta)$, since they 
each cover all $(d-1)$-facets of $\sigma_\delta$ and hence cover the 
boundary of $B(\sigma_\delta)$ completely. 
Since we have Delaunay simplices, $p$ lies outside this collection 
of circumballs.
We show that $q$ must be a vertex of $\sigma_\delta$. 
Suppose it is not the case. 
Then, $q$ lies in the interior of at least one such Delaunay ball $B'$.
In that case, $p$ is closer to the boundary of $B'$ than $B(\sigma_\delta)$.
That means the protection of $p$ with $B'$ is lower than that 
with $\sigma_\delta$, which violates our assumption that $p$ and 
$B(\sigma_\delta)$ is the minimal configuration defining the protection. 
So, $q$ does not lie in the interior of any Delaunay circumball.
This is only possible if $q$ is a vertex of $\sigma_\delta$. 
An example is shown in Figure~\ref{fig:pro2d}.
\begin{figure}
\centering
\includegraphics[scale=0.3]{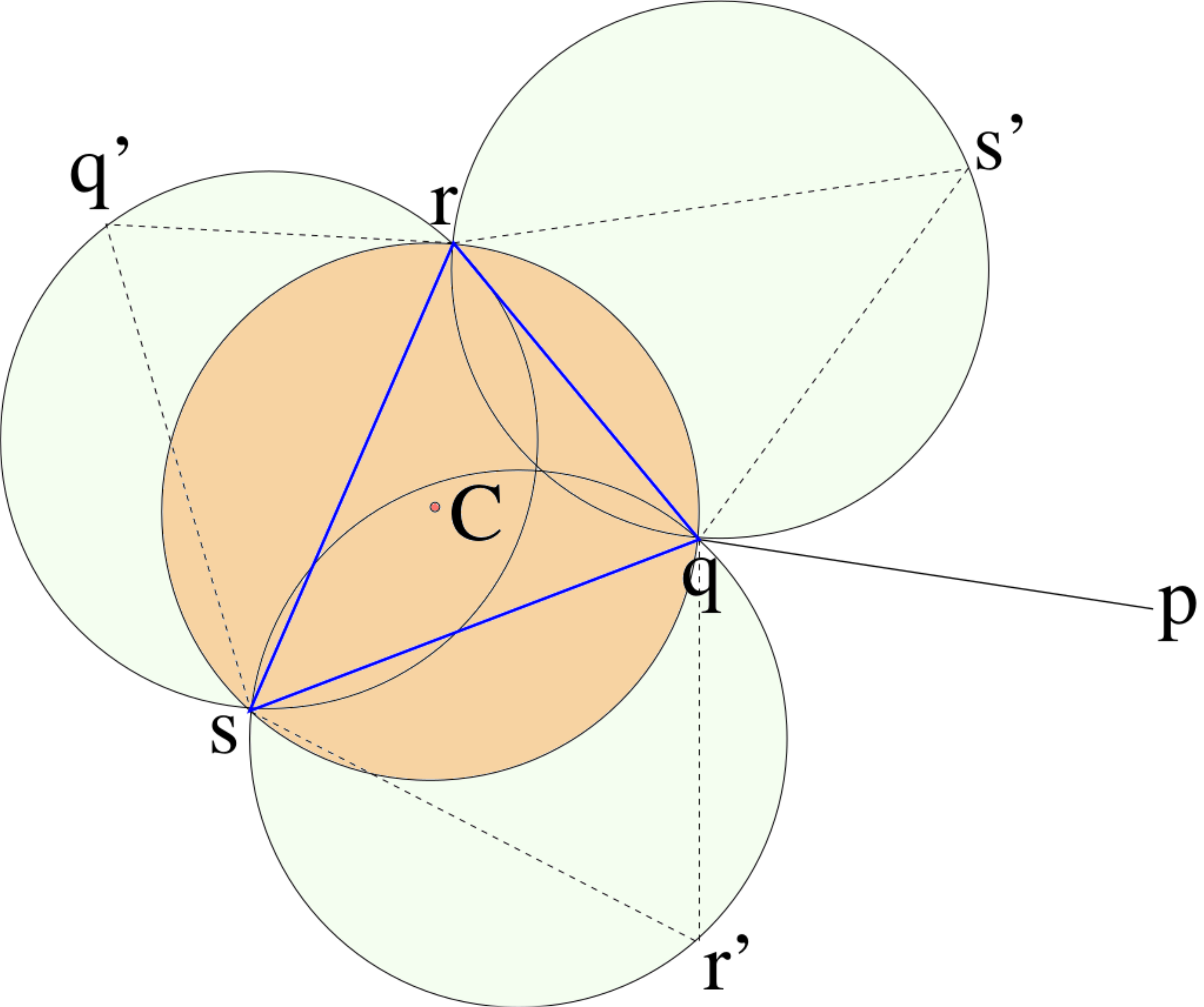}
\caption{A Delaunay simplex $\sigma_\delta=(q,r,s)$ in the plane 
with  circumcenter $C$. 
The vertex $r'$ is the reflection of $r$ along the edge $qs$. 
Reflections along edges $rs$ and $qr$ gives the vertices
$q'$ and $s'$, respectively. 
The circumballs of $(r',q,s)$, $(s',r,q)$ and $(q',s,r)$ completely
cover $\sigma_\delta$. 
The closest point to $p$ on $\sigma_\delta$'s circumball is a vertex 
$q$ of $\sigma_\delta$.
}
\label{fig:pro2d}
\end{figure}

Since $p$ and $q$ are both lattice points, we can write 
$p-q=\sum_{j=1}^{j=d}m_j v^{j}_{\delta}$, where $m_j$s are integers. 
Since $(p,q)$ is orthogonal to $B(\sigma_\delta)$, we have that 
$C_\delta$, $q$ and $p$ are collinear, where $C_\delta$ is the centre 
of $B_\delta$.
Without loss of generality, assume that $q=v_\delta^i$ for $i\in \{0,\ldots,d\}$.
Because of collinearity, we have that
$C_\delta-q= \lambda(p-q)= 
\lambda(\sum_{j=1}^{j=d} m_j v_\delta^j-v_\delta^i)$, or 
$C_\delta=\lambda \sum_{j=1}^{j=d} m_j v_\delta^j + (1-\lambda) v_\delta^i$
for a real value $\lambda$. 
We consider two cases:
\begin{itemize}
\item $i=\{0,d\}$: 
we look at the case when $i=0$. 
The argument for $i=d$ is very similar.
For $i=0$ we have $C_\delta=\lambda \sum_{j=1}^{j=d} m_j v_\delta^j =\lambda p$.
We recall from Remark~\ref{remark:barycentric} that 
$C_\delta=\sum_{k=0}^{d}\mu_k v^k_\delta$, 
where $\mu_0,\mu_d=\frac{1+(d-1)\delta^2}{2d}$
and $\mu_1,\ldots,\mu_{d-1}=\frac{1-\delta^2}{d}$.
This immediately gives us that 
$\lambda m_j = \mu_j$ for each $j$, so that $m_1=\ldots \ldots = m_{d-1}$
and $\frac{m_d}{m_1}=\left(1+\frac{\delta^2}{1-\delta^2}d\right)/2\ge 1$
for an integral solution.
In particular, $m_1= \frac{1}{\lambda}\left(\frac{1-\delta^2}{d}\right)$.
Since $\frac{1-\delta^2}{d}\le \frac{1}{d}$ and $m_i$ is a non-negative integer, 
it holds that $|\frac{1}{\lambda}|\ge d$.
Then, $|p|\ge d|C_\delta|$.
As a result, the protection offered by $p$ is much higher than that of
$n(\sigma_\delta)$, which is a contradiction.

\item $i\in \{1,\ldots,d-1\}$:
we rewrite 
$
C_\delta=\lambda \sum_{j=1}^{j=d} m_j v_\delta^j + (1-\lambda) v_\delta^i
= \lambda \sum_{j=1}^{j=d} n_j v_\delta^j,
$
where $n_i = m_i + 1/\lambda - 1$ and $n_j=m_j$ otherwise for $j\in [d]$.
Again, this gives us that
$\lambda n_j = \mu_j$ for each $j$, so that $n_1=\ldots \ldots = n_{d-1}$
and $\frac{n_d}{n_1}=\left(1+\frac{\delta^2}{1-\delta^2}d\right)/2\ge 1$.
We observe that  $m_1 = n_1 = n_i = m_i - 1+\frac{1}{\lambda}$, so that
$|m_i-m_1|=\left|\frac{1}{\lambda}-1\right|$.
This implies that $\mathrm{min}\{|m_i|,|m_1|\}\ge \frac{d-1}{2}$.
It is easy to verify that $p$ offers higher protection than 
$n(\sigma_\delta)$, which is a contradiction.
\end{itemize}

The last possibility is that the quantity 
$\left(1+\frac{\delta^2}{1-\delta^2}d\right)$
may be irrational, in which case all $m_i$s can not be integers, which
is also a contradiction to our assumption that $p$ is a lattice point.
\end{proof}

Finally, we show that
\begin{lemma}
\label{lemma:same-protection}
Each simplex of the Delaunay triangulation of 
$T_\delta(\Z^d)$ has the same protection for $\delta\in[0,1]$.
\end{lemma}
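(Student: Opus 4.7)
The plan is to exhibit, for any Delaunay $d$-simplex $\tau$ of $T_\delta(\Z^d)$, an isometry $\Phi$ of $\R^d$ that maps $T_\delta(\Z^d)$ onto itself and sends $\tau$ to $\sigma_\delta$. Since protection is a purely metric quantity defined via Euclidean distances from a circumcenter to lattice points, it is preserved under any such lattice-preserving isometry, which forces the protection of $\tau$ to equal that of $\sigma_\delta$.

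The construction of $\Phi$ proceeds in two steps: a translation followed by a coordinate permutation. For the first step, recall from Section~\ref{subsection:diagdist} that the Delaunay triangulation of $T_\delta(\Z^d)$ is the image under $T_\delta$ of the Freudenthal triangulation of the unit cubes tiling $\R^d$. Hence any $d$-simplex $\tau$ lies inside a single distorted unit cube $T_\delta(a+[0,1]^d)$ for some $a\in\Z^d$. Because $T_\delta$ is linear, $T_\delta(\Z^d)$ is closed under addition, so the translation $x\mapsto x - T_\delta(a)$ is an isometry preserving the lattice and sending $\tau$ into the base distorted cube $T_\delta([0,1]^d)$.

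For the second step, I would appeal to the permutation symmetry of the Freudenthal triangulation: its $d!$ top-dimensional simplices are indexed by $S_d$, with $\sigma$ corresponding to the identity and the simplex associated to $\pi$ being the image $P_\pi(\sigma)$ under the coordinate permutation $P_\pi:(x_1,\ldots,x_d)\mapsto (x_{\pi(1)},\ldots,x_{\pi(d)})$. Each $P_\pi$ is an orthogonal transformation preserving $\Z^d$. The key observation is that $P_\pi$ commutes with $T_\delta$: since $P_\pi\vec 1 = \vec 1$ and $\Delta(P_\pi x) = \Delta(x)$, formula~\eqref{equation:dgds} gives $P_\pi\circ T_\delta = T_\delta\circ P_\pi$. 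Consequently $P_\pi$ also preserves $T_\delta(\Z^d)$, and the distorted simplex sitting inside $T_\delta([0,1]^d)$ indexed by $\pi$ equals $P_\pi(\sigma_\delta)$. Composing the translation above with the appropriate inverse coordinate permutation yields the desired isometry $\Phi$.

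The argument concludes by noting that any isometry of $\R^d$ that maps the lattice onto itself sends Voronoi cells to Voronoi cells, circumcenters to circumcenters, and preserves all Euclidean distances to remaining lattice points; Definition~\ref{def:protection} therefore assigns $\tau$ and $\sigma_\delta$ the same protection value, and the lemma follows. The only non-mechanical step is the identification of the $d!$ top-dimensional Freudenthal simplices with the $S_d$-orbit of $\sigma$, which follows from the bijection between monotone chains $0=p_0 < p_1 < \cdots < p_d = \vec 1$ and the sequences of coordinate indices flipped from $0$ to $1$; this is a short combinatorial check rather than a genuine obstacle.
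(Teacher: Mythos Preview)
Your proposal is correct and follows essentially the same approach as the paper: reduce by lattice translation to the base distorted cube, then use the $S_d$ coordinate-permutation symmetry of the Freudenthal triangulation to carry any simplex onto $\sigma_\delta$. Your version is in fact more explicit than the paper's rather terse argument, since you spell out why $P_\pi$ commutes with $T_\delta$ (via $\Delta\circ P_\pi=\Delta$ and $P_\pi\vec 1=\vec 1$) and why a lattice-preserving isometry forces equal protection, whereas the paper merely asserts that the permuted circumcenter and neighbor set can be ``verified'' to give the same value.
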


\begin{proof}
We prove the claim for each simplex that is formed on the distortions of
the vertices of the cube $[-1,1]^d$.
Every other simplex in the triangulation is a translation of one of
these simplices, so the claim follows.

By the monotone property of the Freudenthal triangulation, each simplex 
can be represented as a permutation $\pi$ of $[1,\ldots,d]$; 
to create the $i$-th element of the chain, $1$ is added to the value
at the $\{(d+1)-\pi(i)\}$-th co-ordinate of the $(i-1)$-th element.
For instance, the chain corresponding to the simplex $\sigma$ is 
$\{(0,\ldots,0),(0,\ldots,0,1),(0,\ldots,0,1,1),\ldots,(1,\ldots,1)\}$,
which is obtained
by adding 1 to the co-ordinate positions $\{d,d-1,d-2,\ldots,2,1\}$ in order,
so $\pi_\sigma$ is simply $(1,2,\ldots,d)$, the identity.
To calculate the protection for any other simplex $\tau_\delta$, we need to find 
the circumcenter $C_\delta (\tau_\delta)$ and the elements of $n(\tau_\delta)$.
Both of these are obtained by (indirectly) using the permutation $\pi_\tau$ 
on $C_\delta(\sigma_\delta)$ and on $n(\sigma_\delta)$.
It can be verified that $\tau_\delta$ has the same
protection as that of $\sigma_\delta$.
\end{proof}

We arrive at the results of Theorem~\ref{theorem:diagpro} by
normalizing the protection values of Equation~\eqref{equation:rawprotection}
with the radius 
$R_\delta=\sqrt{\frac{\delta^4(d^2-1)+\delta^2(d^2+2)+d^2-1}{12d}}$.

\subsection{Thickness and Aspect ratio}
\label{subsection:otherquality}

In this sub-section we prove the claims of Theorem~\ref{theorem:thickaspect}.
\paragraph{Height}
First, we calculate the heights of the distorted simplex.
Let $H_\delta^i$ denote the hyperplane passing through all vertices of 
$\sigma_\delta$ except $v_\delta^i$. 
It can be calculated that
\begin{equation}
H_\delta^i:
\begin{cases}
\{\delta-1,\ldots,\delta-1,-(1+(d-1)\delta)\} \cdot x + d\delta=0 & 
\text{for $i=0$}\\
(\underbrace{0,\ldots,0}_{d-i},-1,1,\underbrace{0,\ldots,0}_{i-2})\cdot x=0
& \text{for $i=1,\ldots,d-1$} \\
\{1+(d-1)\delta,1-\delta,\ldots,1-\delta\}\cdot x=0  & \text{for $i=d$}\\
\end{cases}.
\end{equation}
Let $h_\delta^i$ be the height of $v_\delta^i$ to $H_\delta^i$. It follows that 
$h_\delta^0=h_\delta^d=\frac{\delta\sqrt{d}}{(\delta^2d-\delta^2+1)^{1/2}}$
and $h_\delta^1=\ldots=h_\delta^{d-1}=1/\sqrt{2}$.
We can see that $h_0^0=0, h_{1/\sqrt{d+1}}^0=1/\sqrt{2}, h_1^0=1$ and 
$\frac{\partial h_\delta^0}{\partial \delta}=\frac{\sqrt{d}}{(\delta^2d-\delta^2+1)^{3/2}}>0$.
Hence, 
\begin{equation}
 min(h_\delta^i)=
 \begin{cases}
 \frac{1}{\sqrt{2}} & \text{for $1\ge\delta\ge \frac{1}{\sqrt{d+1}}$}\\
 \frac{\delta\sqrt{d}}{\sqrt{\delta^2d-\delta^2+1}} & 
 \text{for $\frac{1}{\sqrt{d+1}} \ge\delta > 0$} \\
 \end{cases}.
\end{equation}

\paragraph{Longest edge}
Now we calculate the longest edge of $\sigma_\delta$.
For any $0\le i<j \le d$, we see that 
$|v_\delta^j-v_\delta^i|=\frac{\sqrt{j-i}\sqrt{d-(1-\delta^2)(j-i)}}{\sqrt{d}}$.
Since $j-i$ can take $d$ distinct values, the edge lengths can take $d$ 
distinct values,
$l_x=\frac{\sqrt{dx-(1-\delta^2)x^2}}{\sqrt{d}}$ where $x:=j-i$ and 
$1\le x\le d$.
To find the longest length, we instead try to maximize the function
$f(x)=l_x^2d=dx-(1-\delta^2)x^2$ since both have the same maxima.
Then, $f(1)=d-(1-\delta^2),f(d)=\delta^2d^2$ and $f'(y)$ is 0 
at $y=\frac{d}{2(1-\delta^2)}$, where $f(y)=\frac{d^2}{4(1-\delta^2)}$. 
Checking for the maxima among these, it turns out that
\begin{equation}
 max(l_x)=
 \begin{cases}
  \delta\sqrt{d} & \text{for $1 \ge \delta\ge \frac{1}{\sqrt{2}}$}\\
  \frac{\sqrt{d}}{2\sqrt{1-\delta^2}} & \text{for $\frac{1}{\sqrt{2}} \ge \delta > 0$}
 \end{cases}.
\end{equation}

Thickness is defined as 
$\Theta_\delta=\frac{min_{0\le i\le d}(h_\delta^i)}{max(l_x)}$.
Aspect ratio is defined as 
$\Gamma_\delta=\frac{min_{0\le i\le d}(h_\delta^i)}{2R_\delta}$.
Using the value 
$R_\delta=\sqrt{\frac{3\delta^2 d^2 + (1-\delta^2)^2(d^2-1)}{12d}}$,
and substituting the rest of the quantities, we arrive at the results
of Theorem~\ref{theorem:thickaspect}.

\section{Acknowledgements}
Aruni Choudhary acknowledges support of ERC grant StG 757609.
Arijit Ghosh is supported by Ramanujan Fellowship (No. SB/S2/RJN-064/2015).

\newcommand{\etalchar}[1]{$^{#1}$}


\begin{thebibliography}{BDGO17}

\bibitem[Aur87]{Aurenhammer87}
F.~Aurenhammer.
\newblock {Power Diagrams: Properties, Algorithms and Applications}.
\newblock {\em {SIAM} Journal on Computing}, 16(1):78--96, 1987.

\bibitem[BA09]{stanford-tech}
J.~Baek and A.Adams.
\newblock Some {U}seful {P}roperties of the {P}ermutohedral {L}attice for
  {G}aussian {F}iltering.
\newblock Technical report, Stanford University, 2009.

\bibitem[BCKO08]{dutchbook}
M.~Berg, O.~Cheong, M.~Kreveld, and M.~Overmars.
\newblock {\em {Computational Geometry: Algorithms and Applications}}.
\newblock Springer-Verlag TELOS, 3rd ed. edition, 2008.

\bibitem[BCY18]{boissonnat_chazal_yvinec_2018}
J.{-}D. Boissonnat, F.~Chazal, and M.~Yvinec.
\newblock {\em Geometric and Topological Inference}.
\newblock Cambridge Texts in Applied Mathematics. Cambridge University Press,
  2018.

\bibitem[BDG13]{bdg-delaunay}
J.-D. Boissonnat, R.~Dyer, and A.~Ghosh.
\newblock {The Stability of Delaunay Triangulations}.
\newblock {\em International Journal of Computational Geometry and
  Applications}, 24(4-5):303--334, 2013.

\bibitem[BDG14]{BoissonnatDGijcga14}
J.{-}D. Boissonnat, R.~Dyer, and A.~Ghosh.
\newblock {Delaunay Stability via Perturbations}.
\newblock {\em International Journal of Computational Geometry and
  Applications}, 24(2):125--152, 2014.

\bibitem[BDG15]{BoissonnatDGesa15}
J.{-}D. Boissonnat, R.~Dyer, and A.~Ghosh.
\newblock {A Probabilistic Approach to Reducing Algebraic Complexity of
  Delaunay Triangulations}.
\newblock In {\em Proceedings of the 23rd Annual European Symposium on
  Algorithms, ESA}, pages 595--606, 2015.

\bibitem[BDG17]{BoissonnatDG17}
J.{-}D. Boissonnat, R.~Dyer, and A.~Ghosh.
\newblock {Delaunay Triangulations of Manifolds}.
\newblock {\em Foundations of Computational Mathematics}, online:1--33, 2017.

\bibitem[BDGO17]{BoissonnatDGO17}
J.{-}D. Boissonnat, R.~Dyer, A.~Ghosh, and S.~Y. Oudot.
\newblock Only distances are required to reconstruct submanifolds.
\newblock {\em Computational Geometry: Theory and Applications}, 66:32--67,
  2017.

\bibitem[BRW17]{BoissonnatRW17}
J.{-}D. Boissonnat, M.~Rouxel{-}Labb{\'{e}}, and M.~Wintraecken.
\newblock {Anisotropic Triangulations via Discrete Riemannian Voronoi
  Diagrams}.
\newblock In {\em Proceedings of the 33rd International Symposium on
  Computational Geometry, SoCG}, pages 19:1--19:16, 2017.

\bibitem[BSTY15]{BoissonnatSTY15}
J.{-}D. Boissonnat, K.{-}L. Shi, J.~Tournois, and M.~Yvinec.
\newblock {Anisotropic Delaunay Meshes of Surfaces}.
\newblock {\em ACM Transactions on Graphics}, 34(2):14:1--14:11, 2015.

\bibitem[BWY15]{BoissonnatWY15}
J.{-}D. Boissonnat, C.~Wormser, and M.~Yvinec.
\newblock {Anisotropic Delaunay Mesh Generation}.
\newblock {\em {SIAM} Journal on Computing}, 44(2):467--512, 2015.

\bibitem[CDE{\etalchar{+}}00]{ChengDEFTjacm00}
S.{-}W. Cheng, T.~K. Dey, H.~Edelsbrunner, M.~A. Facello, and S.{-}H. Teng.
\newblock {Sliver Exudation}.
\newblock {\em Journal of the {ACM}}, 47(5):883--904, 2000.

\bibitem[CDS13]{ChengDSbook13}
S.{-}W. Cheng, T.~K. Dey, and Jonathan~Richard Shewchuk.
\newblock {\em {Delaunay Mesh Generation}}.
\newblock Chapman and Hall / {CRC} computer and information science series.
  {CRC} Press, 2013.

\bibitem[CKR17]{ckr-polynomial}
A.~Choudhary, M.~Kerber, and S.~Raghavendra.
\newblock Polynomial-{S}ized {T}opological {A}pproximations using the
  {P}ermutahedron.
\newblock {\em Discrete {\&} Computational Geometry}, Nov 2017.

\bibitem[CKW17]{ckw-coxeter}
A.~Choudhary, S.~Kachanovich, and M.~Wintraecken.
\newblock {Coxeter triangulations have good quality}.
\newblock HAL preprint, Dec 2017.

\bibitem[CSB87]{csb-book}
J.~H. Conway, N.~J.~A. Sloane, and E.~Bannai.
\newblock {\em Sphere-packings, {L}attices and {G}roups}.
\newblock Springer-Verlag, 1987.

\bibitem[Del34]{Delaunay34}
B.~Delaunay.
\newblock {Sur la sph\`ere vide. A la m\'emoire de Georges Vorono\"\i}.
\newblock {\em Bulletin de l'Acad\'emie des Sciences de l'URSS. Classe des
  sciences math\'ematiques et na}, 7:793--800, 1934.

\bibitem[EIH18]{ei-fcc}
H.~Edelsbrunner and M.~Iglesias-Ham.
\newblock {On the Optimality of the FCC Lattice for Soft Sphere Packing}.
\newblock {\em SIAM Journal on Discrete Mathematics}, 32(1):750--782, 2018.

\bibitem[EK11]{ek-diag}
H.~Edelsbrunner and M.~Kerber.
\newblock {Covering and Packing with Spheres by Diagonal Distortion in
  $\mathbb{R}^n$}.
\newblock In {\em Rainbow of Computer Science - Dedicated to Hermann Maurer on
  the Occasion of His 70th Birthday}, pages 20--35, 2011.

\bibitem[EK12]{ek-dual}
H.~Edelsbrunner and M.~Kerber.
\newblock {Dual Complexes of Cubical Subdivisions of {$\mathbb{R}^n$}}.
\newblock {\em Discrete {\&} Computational Geometry}, 47(2):393--414, 2012.

\bibitem[ELM{\etalchar{+}}00]{EdelsbrunnerLMSTTUWstoc00}
H.~Edelsbrunner, X.Y. Li, G.~L. Miller, A.~Stathopoulos, D.~Talmor, S.{-}H.
  Teng, A.~{\"{U}}ng{\"{o}}r, and N.~Walkington.
\newblock {Smoothing and Cleaning up Slivers}.
\newblock In {\em {Proceedings of the Thirty-Second Annual {ACM} Symposium on
  Theory of Computing, STOC}}, pages 273--277, 2000.

\bibitem[Fre42]{freud-tgl}
H.~Freudenthal.
\newblock Simplizialzerlegungen von {B}eschrankter {F}lachheit.
\newblock {\em Annals of Mathematics}, 43(3), 1942.

\bibitem[IHKU14]{iku-sphere}
M.~Iglesias-Ham, M.~Kerber, and C.~Uhler.
\newblock {Sphere Packing with Limited Overlap}.
\newblock In {\em Proceedings of the Canadian Conference on Computational
  Geometry, CCCG}, pages 155--161, 2014.

\bibitem[Kuh60]{kuhn-subdiv}
H.~W. Kuhn.
\newblock {Some Combinatorial Lemmas in Topology}.
\newblock {\em IBM Journal of Research and Development}, 4(5):518--524, 1960.

\bibitem[Li03]{Litcs03a}
X.Y. Li.
\newblock {Generating well-shaped $d$-dimensional Delaunay Meshes}.
\newblock {\em Theoretical Computer Science}, 296(1):145--165, 2003.

\bibitem[She02]{Shewchuk-techreport2002}
J.~R. Shewchuk.
\newblock {What Is a Good Linear Finite Element? Interpolation, Conditioning,
  Anisotropy, and Quality Measures}.
\newblock Technical report, University of California at Berkeley, 2002.

\end{thebibliography}
\end{document}